\newcolumntype{L}[1]{>{\raggedright\let\newline\\\arraybackslash\hspace{0pt}}m{#1}}
\newcolumntype{C}[1]{>{\centering\let\newline\\\arraybackslash\hspace{0pt}}m{#1}}
\newcolumntype{R}[1]{>{\raggedleft\let\newline\\\arraybackslash\hspace{0pt}}m{#1}}
\definecolor{darkgreen}{rgb}{0,0.5,0}
\definecolor{darkblue}{rgb}{0,0,0.8}
\definecolor{darkred}{rgb}{0.8,0,0}
\newcommand{\bigo}{\mathcal{O}}
\newcommand{\vweight}{\omega}
\newcommand{\pr}{\textup{Pr}}
\newcommand{\target}{v^{*}}
\newcommand{\E}{\mathbb{E}}
\newcommand{\T}{\tau}
\newcommand{\cS}{\mathcal{S}}
\newmdenv[]{aBox}
\newenvironment{algoBox}[1]
    {\begin{center}
     \begin{minipage}{#1\textwidth}
     \begin{aBox}
    }
    {
     \end{aBox}
     \end{minipage}
     \end{center}
    }
\theoremstyle{plain}
\newtheorem{definition}{Definition}[section]
\newtheorem{lemma}[definition]{Lemma}
\newtheorem{theorem}[definition]{Theorem}
\newtheorem{proposition}[definition]{Proposition}
\newtheorem{corollary}[definition]{Corollary}
\newtheorem{algo}[definition]{Algorithm}
\title{Noisy (Binary) Searching: Simple, Fast and Correct\thanks{This work has been partially supported by National Science Centre (Poland) grant number 2018/31/B/ST6/00820.}}
\date{}
\author[1]{\Large Dariusz Dereniowski}
\author[2]{Aleksander~\L{}ukasiewicz}
\author[2]{Przemys\l{}aw~Uzna\'nski}
\affil[1]{\small Gda\'{n}sk~University~of~Technology,~Poland}
\affil[2]{Institute of Computer Science, University~of~Wroc\l{}aw,~Poland}
\begin{document}

\maketitle

\setcounter{page}{0}
\thispagestyle{empty}

\begin{abstract}
This work considers the problem of the noisy binary search in a sorted array.
The noise is modeled by a parameter $p$ that dictates that a comparison can be incorrect with probability $p$, independently of other queries.
We state two types of upper bounds on the number of queries: the worst-case and expected query complexity scenarios.
The bounds improve the ones known to date, i.e., our algorithms require fewer queries.
Additionally, they have simpler statements, and work for the full range of parameters.
All query complexities for the expected query scenarios are tight up to lower order terms.
For the problem where the target prior is uniform over all possible inputs,  we provide an algorithm with expected complexity upperbounded by $(\log_2 n + \log_2 \delta^{-1} + 3)/I(p)$, where $n$ is the domain size, $0\le p < 1/2$ is the noise ratio, and $\delta>0$ is the failure probability, and $I(p)$ is the information gain function.
As a side-effect, we close some correctness issues regarding previous work.
Also, en route, we obtain new and improved query complexities for the search generalized to arbitrary graphs.
This paper continues and improves the lines of research of Burnashev--Zigangirov [Prob. Per. Informatsii, 1974], Ben-Or and Hassidim [FOCS 2008], Gu and Xu [STOC 2023], and Emamjomeh-Zadeh et al. [STOC 2016], Dereniowski et al. [SOSA@SODA 2019].
\end{abstract}

\textbf{Keywords:} Graph Algorithms, Noisy Binary Search, Query Complexity, Reliability

\section{Introduction}

\subsection{Problem statement}

An adaptive search problem for a general \emph{search domain} $\cS$ and an arbitrary \emph{adversary} can be formulated as follows.
The goal is to design an adaptive algorithm, also referred to as a \emph{strategy}, that finds a \emph{target} initially unknown to the algorithm.
Adaptivity means that the subsequent actions of the algorithm depend on the answers already received.
The process is divided into \emph{steps}: in each step the algorithm performs a \emph{query} and receives an \emph{answer}.
Each query-reply pair provides new information to the algorithm: it learns that some part of the search space $S\subseteq\cS$ does not contain the target while its complement does.
From both theoretical and practical viewpoints, it is of interest to develop error-resilient algorithms for such a search process.
This can be modeled, for example, by the presence of a probabilistic noise: each reply can be erroneous with some fixed probability $0<p<\frac{1}{2}$, independently.
The performance of a strategy is measured by the number of performed queries.

\medskip
In this work, we focus on searching with probabilistic noise in two particular types of search domains.
The first is a sorted array (which, in the absence of noise, would lead to the classical binary search problem).
Formally, for a linear order $v_1<\cdots< v_n$ with an unknown position of the target $\target=v_j$, each query selects an element $v_i$, and the algorithm learns from the reply whether  $\target<v_i$ or $\target\geq v_i$.

The second search domain we consider is a simple, undirected graph.
More precisely, for an input graph $G$ and an unknown target vertex $\target$, each query selects some vertex $v$.
The answer either states that $v$ is the target or provides a neighbor $u$ of $v$, that lies on a shortest path from $v$ to $\target$.

Searching through a graph can be viewed as a certain generalization of the former setting, as searching a linear order resembles searching a graph that is a path. However, it is important to note that the two models are not directly comparable, as in a graph search on a path there are three possible replies to a query, whereas in a search through an array, the answers are binary.

\subsection{Overview of our results}

To complete the search process, we need to learn roughly $\log_2 n$ bits of information (the identifier of the target).
We can extract approximately $I(p) = 1 - H(p)$ bits of information from each reply, where $H(p) = -p \log_2 p - (1-p) \log_2 (1-p)$ is the binary entropy function.
Therefore we expect the optimal algorithm to use around $\frac{\log_2 n}{I(p)}$ queries.
In an idealized scenario where there always exists a query that perfectly bisects the search space, regardless of the answer, one could achieve this bound.
However, perfect bisection is typically not possible due to the discrete nature of the search space, which causes algorithms to lose some lower-order terms.

\medskip
Our results are summarized in Table~\ref{tab:res}.
We use the following naming convention: if we are interested in the worst-case analysis of the query complexity, we refer to it as the \emph{worst case} setting throughout the paper.
Conversely, if we analyze the expected number of queries made by an algorithm, we call this the \emph{expected query complexity} setting.
We note that in each setting the process is randomized due to answers of the adversary.
Additionally, some of our algorithms use random bits as well.

The binary search algorithms referenced in the theorems below are detailed in Section~\ref{sec:binary-search}: Algorithms~\ref{alg:binary-worst-case} and \ref{alg:bin-search-expected}  correspond to Theorems~\ref{thm:bin:expected} and \ref{thm:bin:worst-case}, respectively.
Interestingly, both algorithms are essentially the same and differ only by the stopping condition.

\begin{theorem} \label{thm:bin:expected}
For any noise parameter $0< p<\frac{1}{2}$ and a confidence threshold $0 < \delta < 1$, there exists a binary search algorithm that after the \textbf{expected} number of
$$\frac{1}{I(p)} \left( \log_2 n + \log_2\delta^{-1} + 3\right)$$
queries finds the target in any linear order correctly with probability at least $1-\delta$, given that \textbf{the target position is chosen uniformly at random}.
\end{theorem}
Using a previously known reduction from adversarial target placement to the uniformly random choice of a target (\cite{Ben-OrH08}, see Lemma~\ref{shiftingtrick}), we automatically obtain the following result for the adversarial version of the problem.

\begin{corollary} \label{cor:bin:expected}
For any noise parameter $0< p<\frac{1}{2}$ and a confidence threshold $0 < \delta < 1$, there exists a binary search algorithm that after the \textbf{expected} number of
$$\frac{1}{I(p)} \left( \log_2 n + \bigo(\log\delta^{-1}) \right)$$
queries finds the target in any linear order correctly with probability at least $1-\delta$.
\end{corollary}

\begin{theorem} \label{thm:bin:worst-case}
For any noise parameter $0< p<\frac{1}{2}$ and a confidence threshold $0 < \delta < 1$, there exists a binary search algorithm for any linear order that after
$$\frac{1}{I(p)} \left( \log_2 n + \bigo(\sqrt{\log n \log\delta^{-1}}) + \bigo(\log\delta^{-1}) \right)$$
queries returns the target correctly with probability at least $1-\delta$.
\end{theorem}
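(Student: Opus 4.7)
The plan is to analyze Algorithm~\ref{alg:binary-adversarial}, which runs an epoch-based multiplicative weights update starting from the uniform distribution on the $n$ elements of the linear order. At the beginning of each epoch the algorithm selects the element $q$ whose position best bisects the current total weight, marks $q$ as \emph{excluded} from the progress potential, and for the duration of that epoch issues queries tied to $q$ while applying the standard MWU rule (multiply the side consistent with the reply by $1-p$, the opposite side by $p$). The epoch-length schedule is calibrated so that the cumulative number of excluded elements over the whole run stays $\bigo(\log n + \log \delta^{-1})$.

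The core progress estimate is Lemma~\ref{lem:epoch-progress} (Equation~\eqref{eq:excluded-binsearch-bound}): for the weight restricted to non-excluded elements, the expected drop per query corresponds to information gain $I(p)$, even though the fixed query inside an epoch may be far from a perfect bisector after a few updates. Taking logs, I would track the potential
$$
\Phi_t \;=\; \log_2 \omega_t(\target) \;-\; \log_2\!\Bigl(\textstyle\sum_{v\notin E_t} \omega_t(v) - \omega_t(\target)\Bigr),
$$
where $E_t$ is the current excluded set. The lemma shows that $\Phi_t - I(p)\cdot t$ is a submartingale while $\target\notin E_t$, with bounded increments of order $\log_2\!\frac{1-p}{p}$. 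The complementary event $\target\in E_T$ is handled by a dedicated final verification pass that resolves each of the $\bigo(\log n + \log \delta^{-1})$ excluded elements by repeated direct queries, at total cost fitting inside the claimed budget.

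Azuma--Hoeffding applied to $\{\Phi_t - I(p)\cdot t\}$ then yields $\pr[\Phi_T < I(p)T - \lambda] \le \exp(-\Omega(\lambda^2/T))$. Setting
$$
T \;=\; \tfrac{1}{I(p)}\bigl(\log_2 n + c\sqrt{\log n\cdot \log \delta^{-1}} + c\log \delta^{-1}\bigr), \qquad \lambda \;=\; \Theta\!\bigl(\sqrt{\log n\cdot \log \delta^{-1}} + \log \delta^{-1}\bigr)
$$
for a sufficiently large constant $c$ makes the event $\{\Phi_T \ge \log_2 n + \log_2\delta^{-1}\}$ hold except with probability $\delta/2$. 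Conditioned on it, $\omega_T(\target)$ dominates the sum of the non-excluded competitors by a factor $n\delta^{-1}$, so outputting the element of maximum weight (with the verification subroutine settling any candidate coming from the excluded set) is correct. A second union bound of $\delta/2$ absorbs the failure probability of verification, giving total correctness at least $1-\delta$.

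The main obstacle is the epoch-length calibration. Longer epochs suppress $|E_T|$ (needed so that verification fits in the $\bigo(\log \delta^{-1})$ part of the budget), but inside an epoch the fixed query is no longer an optimal bisector after the first few updates, eroding the per-query drift. Lemma~\ref{lem:epoch-progress} is designed precisely to absorb this loss by bounding expected weight drop across a full epoch rather than per individual query; reconciling that expectation estimate with the variance bound fed into Azuma is what produces the $\bigo(\sqrt{\log n \cdot \log \delta^{-1}})$ additive term in the query complexity and matches the claimed form of the theorem.
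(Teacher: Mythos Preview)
Your proposal has the right high-level architecture (epoch-based MWU, separate handling of a marked set), but there are two substantive gaps relative to the paper's argument.

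\textbf{The role of the marked set is inverted.} In the paper the marked set $M_Q$ is not a nuisance to be verified away; it is where the target is \emph{supposed} to land. The whole point of Lemma~\ref{lem:bin-adv} is to show $w_Q(\target) > w_Q(V\setminus M_Q)$, which forces $\target\in M_Q$. The algorithm then runs the Feige et al.\ noisy binary search on $M_Q$ (whose size is $\bigo((\log n+\log\delta^{-1})^3)$, not $\bigo(\log n+\log\delta^{-1})$ as you wrote), at cost $\bigo\!\bigl(\frac{1}{I(p)}(\log\log n+\log\delta^{-1})\bigr)$. Your plan of outputting the heaviest element and separately ``verifying'' each excluded candidate by repeated direct queries both misreads the logic and would not fit the budget: with comparison queries you cannot test a single candidate in isolation, and even if you could, testing $\Theta((\log n)^3)$ candidates with per-candidate confidence boosted for a union bound costs far more than $\bigo(\log\delta^{-1})$.

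\textbf{The Azuma step does not follow from Lemma~\ref{lem:epoch-progress}.} That lemma bounds $\E[W_{\T+k}]$ for the coupled majorizing process $W_t$ over a \emph{whole epoch}; it says nothing about per-step increments of your potential $\Phi_t$, and within an epoch the fixed query need not bisect the current non-excluded weight at all, so $\Phi_t - I(p)t$ is not a submartingale step by step. The paper sidesteps this entirely: it multiplies the epoch-wise expectation bounds to get $\E[W_Q]\le C\,2^{-Q}$ (Equation~\eqref{eq:C}), applies Markov's inequality to $W_Q$ (Corollary~\ref{lem:C-bound}), and separately applies Hoeffding to $\log_2 w_t(\target)$, whose increments \emph{are} i.i.d.\ regardless of epoch structure (as in Lemma~\ref{lem:graph:adv:target-bound}). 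Comparing the two high-probability bounds and solving the resulting inequality via Lemma~\ref{quadratic} is what produces the $\sqrt{\log n\log\delta^{-1}}$ term. Your last paragraph acknowledges the tension between the epoch-level expectation and a concentration argument but does not resolve it; the resolution is precisely this decoupling into Markov on $W_Q$ and Hoeffding on $w_Q(\target)$, not a single Azuma bound on $\Phi_t$.
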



For graph searching we obtain two analogous results (detailed in Section~\ref{sec:graph-search}): Algorithms~\ref{alg:graph-adversarial} and~\ref{alg:graph-expected} correspond to Theorems~\ref{thm:graph:worst-case} and~\ref{thm:graph:expected}, respectively.

\begin{theorem} \label{thm:graph:expected}
For an arbitrary connected graph $G$, a noise parameter $0<p<\frac{1}{2}$ and a confidence threshold $0 < \delta < 1$, there exists an  graph searching algorithm that after the \textbf{expected} number of at most
$$\frac{1}{I(p)} \left( \log_2 n + \bigo(\log \log n) + \bigo(\log\delta^{-1}) \right)$$
queries returns the target correctly with probability at least $1-\delta$.
\end{theorem}

\begin{theorem} \label{thm:graph:worst-case}
For an arbitrary connected graph $G$, a noise parameter $0<p<\frac{1}{2}$ and a confidence threshold $0 < \delta < 1$, there exists an graph searching algorithm that after
$$\frac{1}{I(p)} \left( \log_2 n + \bigo(\sqrt{\log n \log \delta^{-1}}) + \bigo(\log \delta^{-1}) \right)$$
queries returns the target correctly with probability at least $1-\delta$.
\end{theorem}


\setlength\dashlinedash{0.2pt}
\setlength\dashlinegap{1.5pt}
\setlength\arrayrulewidth{0.3pt}
\bgroup\def\arraystretch{1.5}

\begin{table}[htb]
\caption{The summary of the results}
\label{tab:res}
\centering
{
\small
\begin{tabular}{p{0.18\linewidth}C{0.37\linewidth}C{0.37\linewidth}}
\toprule
\emph{Setting}                                 & \emph{Binary search}                                                 & \emph{Graph search}                  \\
\midrule
\multirow{2}{\linewidth}{Worst case query complexity:}                     & \multicolumn{2}{c}{$\frac{1}{I(p)}\left(\log_2 n + \bigo(\sqrt{\log n \log \delta^{-1}})+ \bigo(\log \delta^{-1})\right)$} \\
                                                 &  (Thm.~\ref{thm:bin:worst-case})   &   (Thm.~\ref{thm:graph:worst-case})   \\
\cdashline{2-3}
\multirow{2}{\linewidth}{Expected query complexity:}     & $\frac{1}{I(p)}(\log_2 n + \bigo(\log\delta^{-1}))$ & $\frac{1}{I(p)}(\log_2 n + \bigo(\log\log n) + \bigo(\log\delta^{-1}))$
\\
                                                 & (Cor.~\ref{cor:bin:expected}) &  (Thm.~\ref{thm:graph:expected})\\
\bottomrule
\end{tabular}
}
\end{table}

\subsection{Comparison with previous works}
\subsubsection{Upper bounds for noisy binary search}\label{sec:comparison-binary-search}
Table~\ref{tab:upper-binary} provides an overview of the known algorithms for noisy binary search that have complexity close to the optimal $\frac{\log_2 n}{I(p)}$.
We provide a detailed comparison with our results below.
\begin{itemize}
\item Burnashev and Zigangirov \cite{Burnashev1974} studied this problem from an information-theoretic perspective as early as 1974\footnote{Curiously, it appears that until recently, this work has been largely unknown to the algorithmic community, despite the fact that the paper in question has over 100 citations.
There is no mention of \cite{Burnashev1974} in the well-known survey by Pelc \cite{PELC200271}, nor in the subsequent works that we reference.
We suspect that the main reason for this oversight is that, until very recently, the work was only available in Russian.
For the English translation of the algorithm and the proof, see \cite{DBLP:conf/isit/WangGW22}.}, in a setting where the location of the target element is chosen uniformly at random.
We highlight that their query complexity is worse than ours by an additive term of $\log_2 \frac{1-p}{p}$, which tends to infinity when $p \rightarrow 0$.
This behavior is rather unnatural, since when $p=0$, the noise disappears, and the problem reduces to standard binary search.
\item Feige et al. \cite{FeigeRPU94}, in their seminal work, considered several problems in the noisy setting and, in particular, developed an asymptotically optimal algorithm for noisy binary search (with an adversarially placed target). However, their method intrinsically incurs a non-optimal constant in front of $\frac{\log_2 n}{I(p)}$.
\item Later, Ben-Or and Hassidim \cite{Ben-OrH08}, likely unaware of \cite{Burnashev1974}, developed algorithms with an expected query complexity of $\frac{1}{I(p)}(\bigo(\log n) + \bigo(\log \log n) + \bigo(\log \delta^{-1}))$.
However, we claim that their proofs contain two \emph{serious issues}.

Firstly, in the proof of Lemma~2.6 in~\cite{Ben-OrH08}, they consider all the queries made by the algorithm throughout its execution and sort them by their positions.
Then, the number of '$<$' answers in a fixed interval of positions is claimed to follow a binomial distribution.
Notice, however, that while the answers to the particular queries are independent random variables, the \emph{positions} of the queries depend on the answers to the previous queries, and the act of forgetting the order introduces correlation.
To further illustrate this point, we note the rightmost query in their algorithm is guaranteed to have a '$<$' answer, because a '$\geq$' answer would have changed the weights maintained by the algorithm, causing the next query to be asked further to the right, which contradicts the assumption that this query is the rightmost.

Secondly, the final expected number of steps is bounded by the ratio of total information needed to identify the target and the expected information gain per step (without any additional comments).
However, the expected value does not work in this manner directly -- to make this approach effective, one needs to employ additional probabilistic tools.
Our paper uses Wald's identity for this purpose, see \cite{gu2023p5aastc} for an example of the usage of martingales and the Optional Stopping Theorem.
Moreover, the choice of a particular probabilistic theorem and the way this tool is handled may incur additional lower-order terms, making it unclear what the final complexity would be.
\end{itemize}


\begin{table}[htb]
  \caption{Upper bounds for noisy binary search.}
  \label{tab:upper-binary}
  \centering
  \begin{tabular}{p{0.1\linewidth}C{0.47\linewidth}R{0.36\linewidth}}
  \toprule
  \emph{Setting} & \emph{Query complexity} & \emph{References and Notes} \\
  \midrule
  \multirow{3}{\linewidth}{Expected, uniform prior}  & $\frac{1}{I(p)}(\log_2 n + \log_2 \delta^{-1} + \log_2 \frac{1-p}{p})$           &  Burnashev and Zigangirov  \cite{Burnashev1974}    \\
         &     $\frac{1}{I(p)}(\log_2 n + \bigo(\log \log n) + \bigo(\log \delta^{-1}))$           &    Ben-Or and Hassidim     \cite{Ben-OrH08} (Correctness issues, see Sec.\ref{sec:comparison-binary-search}).  \\
         &     $\frac{1}{I(p)} \left( \log_2 n + \log_2\delta^{-1} + 3\right)$           &   This work, Thm.~\ref{thm:bin:expected}.    \\
  \cdashline{1-3}
  \multirow{3}{\linewidth}{Expected, adversarial target}  &  $\frac{1}{I(p)}(\bigo(\log n) + \bigo(\log \delta^{-1}))$          &   Feige et al.     \cite{FeigeRPU94}    \\
         &     $\frac{1}{I(p)}(\log_2 n + \bigo(\log \log n) + \bigo(\log \delta^{-1}))$           &    Ben-Or and Hassidim     \cite{Ben-OrH08} (Correctness issues, see Sec.\ref{sec:comparison-binary-search}).  \\
         &     $\frac{1}{I(p)}(\log_2 n + \bigo(\log_2 \delta^{-1}) )$           &   This work, Cor.~\ref{cor:bin:expected}.    \\
  \cdashline{1-3}
  Worst case   & $\frac{1}{I(p)}(\log_2 n + \bigo(\sqrt{\log n \log \delta^{-1}})+ \bigo(\log \delta^{-1}))$          &  This work, Thm.~\ref{thm:bin:worst-case}.    \\

    \bottomrule
  \end{tabular}
\end{table}

\subsubsection{Reductions in complexity for expected length setting}\label{sec:bin-reductions}
Ben-Or and Hassidim in their work \cite{Ben-OrH08} showed a general technique that can transform any of the aforementioned algorithms (regardless if they are in the worst case or expected complexity setting) into an algorithm with the expected query complexity that is better by roughly a multiplicative factor of $(1-\delta)$ at the cost of additive lower order term of order $\frac{\bigo(\log \log n)}{I(p)}$.
Very recently Gu and Xu \cite{gu2023p5aastc} showed how to improve that reduction in order to obtain a better constant in front of $\log \delta^{-1}$.
They plug in our algorithm for noisy binary search (Corollary \ref{cor:bin:expected}) as a black-box in order to get the $(1+o(1))((1-\delta)(\frac{\log_2 n}{I(p)} + \frac{\bigo(\log \log n)}{I(p)}) + \frac{\log_2 1/\delta}{(1-2p)\log \frac{1-p}{p}})$ expected query complexity.

\subsubsection{Upper bounds for noisy graph search}

Known algorithms for noisy graph search are summarized in Table~\ref{tab:upper-graphs}.
The problem for arbitrary graphs was first considered by Emamjomeh-Zadeh et al.\cite{Emamjomeh-Zadeh:2015aa}.
Later on Dereniowski et al. \cite{DereniowskiTUW19} simplified that algorithm and obtained an improved dependence on $\log \delta^{-1}$.
We make another progress in that direction: we further simplify the algorithms and the analysis while simultaneously improving the dependence on $\log \delta^{-1}$ even further.

\begin{table}[htb]
  \caption{Upper bounds for noisy graph search.}
  \label{tab:upper-graphs}
  \centering
  \begin{tabular}{p{0.1\linewidth}C{0.47\linewidth}R{0.36\linewidth}}
  \toprule
  \emph{Setting} & \emph{Query complexity} & \emph{References and Notes} \\
  \midrule
    Expected      &   $\frac{1}{I(p)}(\log_2 n +  \bigo(\log \log n) +  \bigo( \log \delta^{-1}) )$   &   This work, Thm.~\ref{thm:graph:expected}. \\
  \cdashline{1-3}
    \multirow{3}{\linewidth}{Worst case} &  $\frac{1}{I(p)}(\log_2 n +o(\log n)+ \bigo(\log^2 \delta^{-1}))$   &    Emamjomeh-Zadeh et al., 2016 \cite{Emamjomeh-Zadeh:2015aa}  \\
         & $\frac{1}{I(p)}(\log_2 n + \bigo(\sqrt{\log n \log \delta^{-1}}\cdot \log \frac{\log n}{\log \delta^{-1}})+ \bigo(\log \delta^{-1}))$                 &  Dereniowski et al., 2019 \cite{DereniowskiTUW19}   \\
         & $\frac{1}{I(p)}(\log_2 n + \bigo(\sqrt{\log n \log \delta^{-1}})+ \bigo(\log \delta^{-1}))$  & This work, Thm.~\ref{thm:graph:worst-case}.   \\
  \bottomrule
  \end{tabular}
\end{table}

\subsubsection{Lower bounds}
For the expected complexity of noisy binary search, Ben-Or and Hassidim \cite{Ben-OrH08} established the first lower bound of $(1 - \delta)\frac{\log_2 n - 10}{I(p)}$.
Recently Gu and Xu \cite{gu2023p5aastc} improved the lower bound to $(1-o(1))((1-\delta)\frac{\log_2 n}{I(p)} + \frac{\log_2 1/\delta}{(1-2p)\log \frac{1-p}{p}})$.
However it works only for constant noise parameter $p$.
They leave the question of improving the lower bound for an arbitrary $p$ as an open problem.

Very recently, Gretta and Price \cite{gretta.price:2024:51stint.colloq.autom.lang.program.icalp2024} obtained a lower bound for the worst case setting of a more general problem (known as \emph{Noisy Binary Search with Monotonic Probabilities}, which was introduced for the first time by \cite{KarpK07}).
For the worst case noisy binary search, their work implies a lower bound of the natural $\frac{\log_2 n}{I(p)}$.
We note that we are not aware of any lower bounds for the lower order terms dependent on $n$ in any of the considered settings.


\subsection{Overview of the techniques}
The core building block of our algorithms is Multiplicative Weights Update technique (MWU).
This method has been employed in the past for noisy binary search and related problems \cite{Ben-OrH08, BorgstromK93, DereniowskiTUW19, KarpK07, RivestMKWS80}.
The general outline of the method is as follows: we maintain weights that denote the ''likelihood'' of particular elements being the target and multiplicatively update them according to the answers to subsequent queries.
After a certain number of steps, the algorithm returns the element with the highest weight, as it is the element we deem most likely to be the target.

The typical problem with this approach is that, at some point, we may encounter a situation where there is no good element to query, meaning that no query divides the search space close to the bisection.
This usually occurs when a particular element becomes heavy, and subsequently querying this element yields less and less information.
Previous works tried to different ways to resolve this issue, e.g. by ensuring that a good approximation of the target has been found and calling the algorithm recursively \cite{Ben-OrH08}, introducing a phase with majority voting \cite{Emamjomeh-Zadeh:2015aa}, etc.

We take a different approach by using a specifically tailored measure of progress of our algorithms, which differs from those used in previous works.
For binary search, we define this measure as the total weight minus the weight of the target element.
In the context of graph searching, the measure is slightly different -- we use the total weight minus the weight of the heaviest vertex.
This contrasts with the approach taken in prior studies, such as in \cite{DereniowskiTUW19}, where the total weight itself was utilized.
It is important to note that, in the case of graph search, the identity of the heaviest vertex may change throughout the search process, and at times, it may not even be the target vertex.
However, our analysis guarantees the target will become the heaviest vertex by the end of the search, within the desired probability threshold.

This subtle change proves to be powerful and plays a vital role in all our proofs.
We believe this is the key idea that enabled us to overcome the obstacles that the authors of the previous works might have faced.

Furthermore, in the case of binary search, when selecting which vertex to query, we employ a technique similar to that of \cite{Burnashev1974}.
Specifically, whenever we identify two elements that are closest to bisecting the search space, we randomly choose one of them with appropriate probability.
Our analysis demonstrates that this effectively simulates the ideal subdivision of the search space and ensures the desired progress of our algorithm.

\subsection{Other related work} \label{sec:related-work}
There are many variants of the interactive query games, depending on the structure of queries and the way erroneous replies occur.
The study of such games was initiated by R\'{e}nyi~\cite{Renyi61} and Ulam~\cite{Ulam76}.
A substantial amount of literature deals with a fixed number of errors for arbitrary membership queries or comparison queries; here we refer the reader to surveys~\cite{Deppe2007,PELC200271}.
Among the most successful tools for tackling binary search with errors, is the idea of a volume~\cite{Berlekamp68,RivestMKWS80}, which exploits the combinatorial structure of a possible distribution of errors.
A natural approach of analyzing decision trees has been also successfully applied, see e.g. \cite{FeigeRPU94}.
See \cite{BorgstromK93,Emamjomeh-Zadeh:2015aa} for examples of partitioning strategies into stages, where in each stage the majority of elements is eliminated and only few `problematic' ones remain.
For a different reformulation (and asymptotically optimal results) of the noisy search see \cite{KarpK07}.

Although the adversarial and noisy models are most widely studied, some other ones are also considered.
As an example, we mention the (linearly) bounded error model in which it is guaranteed that the number of errors is a $r$-fraction, $r<\frac{1}{2}$, of (any initial prefix of) the number of queries, see e.g.~\cite{AslamD91,BorgstromK93,DhagatGW92}.
Interestingly, it might be the case that different models are so strongly related that a good bound for one of them provides also tight bounds for the other ones, see e.g. \cite{DereniowskiTUW19}.
We refer the reader to distributional search where an arbitrary target distribution is known to the algorithm a priori \cite{DaganFGM17,DBLP:conf/innovations/DaganFKM21,Shannon48}.
A closely related theory of coding schemes for noisy communication is out of scope of this paper and we only point to some recent works \cite{BravermanEGH16,GellesHKRW16,GellesMS14,Haeupler14,LeungNSTYY18}.

The first steps towards generalizing binary search to graph-theoretic setting are works on searching in partially ordered data \cite{Ben-AsherFN99,LaberMP02,LamY01}.
Specifically for the node search that we consider in this work, the first results are due to Onak and Parys for trees~\cite{OnakP06}, where an optimal linear-time algorithm for error-less case was given.

\section{Preliminaries} \label{sec:preliminaries}
Whenever we refer to a \emph{search space}, we mean either an (undirected and unweighted) graph or a linear order.
Consequently, by an \emph{element} of a search space, we refer to a vertex or an integer, respectively.
In the following, let $n$ denote the size of the search space, i.e., either the number of vertices in a graph or the number of integers in a linear order.

Throughout the search process, the strategies will maintain the weights $\vweight(v)$ for the elements $v$ of a search space $V$.
For any $0\leq c\leq 1$, $v$ is \emph{$c$-heavy} if $\vweight(v)/\vweight(V)\geq c$, where for any subset $U\subseteq V$ we write $\vweight(U)=\sum_{u\in U}\vweight(u)$.
$\frac{1}{2}$-heavy elements play a special role and we call them \emph{heavy} for brevity.
The weight of an element $v$ at the end of step $t$ is denoted by $\vweight_t(v)$, with $\vweight_0(v)$ being the initial value.
The initial values are set uniformly by putting $\vweight_0(v)=1$ for each $v$ in our algorithms.


\paragraph*{Noisy binary search definition and model specifics}
In the \emph{noisy binary search} problem, we operate on a linear order $v_1 < \cdots < v_n$.
We are given an element $\target$ and the values $p \in [0, \frac{1}{2}), \delta \in (0, 1)$ as input.
We are promised that there exists some $i \in [n]$ such that $\target = v_i$.
We call $\target$ the \emph{target} element.
We can learn about the search space by asking if $\target < v_j$ for any $j \in [n]$, and receiving an answer that is correct with probability $1-p$, independently for each query.
The goal is to design an algorithm that finds the $i \in [n]$ such that $\target = v_i$, and returns this index correctly with probability at least $1 - \delta$.
We strive to minimize the number of queries performed in the process.

We adopt the following naming convention for query results.
When we ask if $\target \overset{?}{<} v_i$ and receive an affirmative answer (i.e., $\target$ is less than $v_i$), we call it a \emph{yes-answer}.
If the reply indicates $\target$ is greater than or equal to $v_i$ (i.e., a negative answer), we call it a \emph{no-answer}.
An element $v_j$ of a search space is considered \emph{compatible} with the reply to a query $\target \overset{?}{<} v_i$ if and only if:
\begin{itemize}
  \item For a yes-answer (indicating $\target < v_i$), $j<i$.
  \item For a no-answer (indicating $\target \geq v_i$), $j \geq i$.
\end{itemize}

\paragraph*{Noisy graph searching definition and model specifics}
In the \emph{noisy graph searching} problem we are given an unweighted, undirected, simple graph $G$ and the values $p \in [0, \frac{1}{2}), \delta \in (0, 1)$.
We know that one vertex $\target$ of $G$ is marked as the \emph{target}, but we don't know which one is it.

We can query the vertices of $G$, upon querying a vertex $q$ we get one of two possible answers:
\begin{itemize}\setlength\itemsep{0em}
  \item $\target = q$, i.e. the queried vertex is the target.
  We call it a \emph{yes-answer}.
  \item $\target \neq q$, but some neighbor $u$ of $q$ lies on a shortest path from $q$ to $\target$.
  We call it a \emph{no-answer}.
  If there are multiple such neighbors (and hence shortest paths), then we can get an arbitrary one as an answer.
\end{itemize}

In fact, we assume for simplicity that each reply is given as a single vertex $u$.
If $u = q$, then we interpret it as a yes-answer.
If $u \neq q$, then $u$ is a neighbor of $q$ that lies on a shortest path from $q$ to $\target$.
Again, we are interested in the noisy setting, therefore every reply is correct independently with probability $1 - p$.
Observe that if the answer is \emph{incorrect} then it can come in different flavors:
\begin{itemize}\setlength\itemsep{0em}
  \item if $q = \target$, then an incorrect answer is any neighbor $u$ of $q$,
  \item if $q \neq \target$, then an incorrect answer may be either $q$ or any neighbor of $q$ that does not lie on a shortest path from $q$ to $\target$.
\end{itemize}
Clearly, in both cases there might be several possible vertices that constitute an incorrect answer.
Here we assume the strongest possible model where every time the choice among possible incorrect replies is made adversarially and independently for each query.
The goal is, similarly as in noisy binary search, to design an algorithm that finds a target correctly with probability at least $1-\delta$ and minimizes the number of queries.
In fact, in this work we operate in a slightly weaker model of replies (as compared to \cite{DereniowskiTUW19,Emamjomeh-ZadehK17,Emamjomeh-Zadeh:2015aa,OnakP06}) in which an  algorithm receives less information in some cases.
This is done in somewhat artificial way for purely technical reasons, i.e., to simplify several arguments during analysis.
The only change to the model we have just described happens when we query a vertex that is heavy at the moment and a \emph{no-answer} has been received.
More specifically, if a \emph{heavy} $q$ is queried and a \emph{no-answer} is given, the algorithm reads this reply as: the \emph{target is not $q$} (ignoring the direction the target might be).
Observe that this only makes our algorithms stronger, since they operate in a weaker replies model and any algorithmic guarantees for the above model carry over to the generic noisy graph search model.

Similarly to the case of noisy binary search, we say that a vertex $v$ is \emph{compatible} with the reply to the query $q$ if and only if:
\begin{itemize}\setlength\itemsep{0em}
  \item $v = q$ in case of a yes-answer.
  \item The neighbor $u$ given as a no-answer lies on a shortest path from $q$ to $v$ and $q$ was not heavy.
  \item $v \neq q$ in case of a no-answer when $q$ was heavy.
\end{itemize}

\paragraph*{Common mathematical tools and definitions}

We adopt the notation from \cite{DereniowskiTUW19} and denote $\varepsilon=\frac{1}{2}-p$ and $\Gamma = \frac{1-p}{p}$.
These quantities appear frequently throughout the proofs, and this notation helps to make the presentation more concise.

The \emph{information function}, denoted by $I(p)$, appears in all our running times.
It is defined as follows $I(p) = 1 - H(p) = 1 + p \log_2 p + (1-p) \log_2 (1-p)$.
In the analysis of our algorithms we frequently use the following quantitative fact about $I(p)$ and $\log_2 \Gamma$.

\begin{proposition} \label{prop:info-and-gamma-bounds}
  We have $I(p) = \Omega(\varepsilon^2)$ and $(\log_2 \Gamma)^2  p (1-p) = \bigo(\varepsilon^2)$.
\end{proposition}
\begin{proof}
We first observe that by Taylor's series expansion (which can be derived using elementary calculus, see e.g. \cite{4502235}):
$$I(p) = \frac{1}{2\ln 2} \sum_{n=1}^{\infty} \frac{(2\varepsilon)^{2n}}{n(2n-1)} \ge \frac{4\varepsilon^2}{2\ln2}.$$

To show the second bound we compute
$$p(1-p) (\log_2 \Gamma) = (1/2-\varepsilon)(1/2+\varepsilon) \left(\log_2 \frac{1+2\varepsilon}{1-2\varepsilon}\right)^2$$
$$=(1-4\varepsilon^2) \left( \tanh^{-1} 2\varepsilon \right)^2 \frac{1}{ (\ln 2)^2 } \le \frac{4}{(\ln 2)^2} \varepsilon^2$$
where the last step follows by observing that under the substitution $\varepsilon = 1/2 \cdot \tanh \gamma$ it reduces to
$ \gamma^2  \le  \sinh^2 \gamma$ and that inequality follows immediately from the Taylor expansion of $\sinh^2 \gamma$.
\end{proof}

Let $(X_m)_{m\in \mathbb{N}}$ be a sequence of i.i.d random variables.
We say that a random variable $T$ is a \emph{stopping time} (with respect to $(X_m)_{m\in \mathbb{N}}$) if $\mathbbm{1}_{\{T \leq m\}}$ is a function of $X_1, X_2, \ldots, X_m$ for every $m$.

We will use the following version of the Wald's identity.

\begin{proposition}[Wald's Identity \cite{wald1945ams}]\label{thm:wald}
  Let $(X_m)_{m\in \mathbb{N}}$ be i.i.d with finite mean, and $T$ be a stopping time with $\E[T] < \infty$. Then $\E[X_1 + \cdots + X_T] = \E[X_1]\E[T]$.
  \end{proposition}

Let us also recall a basic version of a Hoeffding bound, which we use in our calculations of query complexities in the worst case setting.


\paragraph*{Multiplicative Weights Update Method}
The core building block of our strategies for both binary and graph search is a standard Multiplicative Weights Update (MWU) technique.
Below we formally define the version of MWU that we use in our algorithms (Algorithm~\ref{alg:MWU}).
\begin{algoBox}{0.85}
\begin{algo} \label{alg:MWU} \textup{(MWU updates.)}

\medskip
In a step $t+1$, for each element $v$ of the search space do:

\hspace*{15pt}if $v$ is compatible with the answer, then $\vweight_{t+1}(v) \leftarrow \vweight_t(v) \cdot 2(1-p)$,

\hspace*{15pt}if $v$ is not compatible with the answer, then $\vweight_{t+1}(v) \leftarrow \vweight_t(v) \cdot 2p$.
\end{algo}
\end{algoBox}

Directly from the statement of our MWU method we obtain the following bound on the weight of the target.
This bound applies to both binary and graph search, as the analysis is based solely on the number of erroneous replies and the fact that the target is always compatible with a correct answer.
\begin{lemma} \label{lem:target-bound}
If $\target$ is the target, then after $\T$ queries, with probability at least $1-\delta$ it holds
$$\vweight_{\T}(\target) \ge  \Gamma^{- \sqrt{2 p(1-p) \T \ln \delta^{-1}}} 2^{I(p) \T}.$$
\end{lemma}
\begin{proof}
After $\T$ queries with at most $\ell$ erroneous replies, the weight of the target satisfies:
$$\vweight_{\T}(\target) \ge (2p)^{\ell} (2(1-p))^{\T-\ell} = \Gamma^{p\T - \ell} 2^{I(p) \T}.$$
Denote $a = \sqrt{2 p(1-p) \T  \ln \delta^{-1}}$.
Then by Chernoff-Hoeffding bound \cite{hoeffding1963jasa}, with probability at most $\delta$ there is $\ell - p\T \ge a$.
Thus, after $\T$ queries, with probability at least $1-\delta$ the weight of the target satisfies $\vweight_{\T}(\target) \ge \Gamma^{-a} 2^{I(p)\T}.$
\end{proof}

\paragraph*{Uniform prior for binary search}
One can assume that the distribution of the target element in noisy binary search is \emph{a priori} uniform by using a shifting trick described by Ben-Or and Hassidim \cite{Ben-OrH08}.
We formally state it as a lemma below.

\begin{lemma}[c.f. \cite{Ben-OrH08}]
\label{shiftingtrick}
  Assume that the target element in noisy binary search problem was chosen adversarially.
  One can reduce that problem to the setting where the target element is chosen uniformly at random using $\bigo (\frac{\log \delta^{-1}}{I(p)})$ queries.
\end{lemma}

\section{Binary Search Algorithm} \label{sec:binary-search}

Each query performs the MWU updates using Algorithm~\ref{alg:MWU}.
The element to be queried is selected as follows: let $k$ be such that $\sum_{i=1}^{k-1} \vweight(v_i) \le \vweight(V)/2$ and $\sum_{i=1}^k\vweight(v_i) \ge \vweight(V)/2$.
Since the queries $v_k$ and $v_{k+1}$ are the closest possible to equi-division of the total weight, the algorithm chooses one of those with appropriate probability (cf. Algorithm~\ref{alg:bin-search-basic}.)
\begin{algoBox}{1.0}
\begin{algo} \label{alg:bin-search-basic} \textup{(Query selection procedure.)}

\medskip
In step $\T$: let $k$ be such that $\sum_{i=1}^{k-1} \vweight_{\T}(v_i) \le \frac{\vweight_{\T}(V)}{2} \le \sum_{i=1}^{k} \vweight_{\T}(v_i)$.

Then, query $v_k$ with probability $\frac{1}{2\vweight_{\T}(v_k)}(\sum_{i=1}^{k} \vweight_{\T}(v_i) - \sum_{i=k+1}^{n} \vweight_{\T}(v_i))$, and  otherwise query $v_{k+1}$.
\end{algo}
\end{algoBox}
In order to turn Algorithm~\ref{alg:bin-search-basic} into a particular strategy, we will provide a stopping condition for each model.
We start by determining the expected weight preservation during the search.

\begin{lemma} \label{lem:nontargetdrop}
$\E[\vweight_{\T+1}(V \setminus \{\target\})\ |\ \vweight_{\T}(V \setminus \{\target\})] \le \vweight_{\T}(V \setminus \{\target\}).$
\end{lemma}
\begin{proof}
We consider three cases, and show that this bound holds in each of those independently.
Denote $A = \sum_{i=1}^{k-1} \vweight_{\T}(v_i)$, $B = \vweight_{\T}(v_k)$ and $C = \sum_{i=k+1}^{n} \vweight_{\T}(v_i)$.
Denote the probability of querying $v_k$ as $\alpha = \frac{A+B-C}{2B}$, and the probability of querying $v_{k+1}$ as $\beta =  \frac{C+B-A}{2B}$.

\noindent
Case 1: $\target = v_k$.
\begin{eqnarray*}
\E[\vweight_{\T+1}(V \setminus \{\target\})\ |\ \vweight_{\T}(V \setminus \{\target\})] & = & \alpha[2p^2C + 2(1-p)^2 C+2p(1-p)A+2p(1-p)A] \\
 & & + \beta[2p^2A + 2(1-p)^2 A+2p(1-p)C+2p(1-p)C] \\
 & = & \alpha[(1+4\varepsilon^2) C+(1-4\varepsilon^2)A] + \beta[(1+4\varepsilon^2) A+(1-4\varepsilon^2)C].
\end{eqnarray*}
Using the definition of $\alpha$ and $\beta$, we obtain
\begin{eqnarray*}
\E[\vweight_{\T+1}(V \setminus \{\target\})\ |\ \vweight_{\T}(V \setminus \{\target\})] & = & -4\varepsilon^2 \frac{(A-C)^2}{B}+ (A+C) \\
 & = & -4\varepsilon^2 \frac{(A-C)^2}{B} +  \vweight_{\T}(V \setminus \{\target\}).
\end{eqnarray*}

\noindent
Case 2: $\target < v_k$. In this case,
\begin{eqnarray*}
\E[\vweight_{\T+1}(V \setminus \{\target\})\ |\ \vweight_{\T}(V \setminus \{\target\})] & = & (2p^2 + 2(1-p)^2)(A-\vweight_{\T}(\target)) \\
 & & + [\alpha 4p(1-p)+ \beta(2p^2+2(1-p)^2)]B+4p(1-p)C.
\end{eqnarray*}
Denote $p_1 = p^2+(1-p)^2$ and $p_2 = 2p(1-p)$. Observe $p_1+p_2 = 1$ and $p_1 \ge \frac{1}{2} \ge p_2$. Then,
\begin{eqnarray*}
\E[\vweight_{\T+1}(V \setminus \{\target\})\ |\ \vweight_{\T}(V \setminus \{\target\})] & = & 2p_1 A + (C+B-A)p_1 + (A+B-C)p_2 + 2p_2C - 2p_1\vweight_{\T}(\target) \\
 &=& A+B+C - 2p_1 \vweight_{\T}(\target)\\
 &\le & \vweight_{\T}(V \setminus \{\target\}).
\end{eqnarray*}

\noindent
Case 3: $\target > v_{k+1}$ is symmetric to case 2.
\end{proof}

\subsection{Proof of Theorem~\ref{thm:bin:expected} (The expected strategy length)} \label{sec:bin-search:las-vegas}

\begin{algoBox}{0.95}
\begin{algo} \label{alg:bin-search-expected} \textup{(The expected strategy length for binary search.)}

\medskip
Initialization: $\vweight_0(v_i)\leftarrow1$ for each $v_i\in V$.\\
Execute Algorithm~\ref{alg:bin-search-basic} until in some step $\T$ it holds $\frac{\vweight_{\T}(v_j)}{\vweight_{\T}(V)} \ge 1-\delta$ for some $v_j$.\\
Return $v_j$.
\end{algo}
\end{algoBox}

To show correctness of Algorithm~\ref{alg:bin-search-expected}, we need to observe that in the case of binary search, our MWU updates are, in fact Bayesian updates, that is, the normalized weights follow a posterior distribution conditioned on the replies seen so far.
We state this as a lemma below.

\begin{lemma}\label{lem:bayesian-posterior}
  After any step $\T$ of Algorithm~\ref{alg:bin-search-expected} we have for every $v_i \in V$
  $$
    \Pr[\target = v_i\ |\ \T \text{ observed replies}] = \frac{\vweight_{\T}(v_i)}{\vweight(V)}.
  $$
\end{lemma}
\begin{proof}
  The proof is by induction on $\T$.
  The base case is trivial, because we assign the weights uniformly in the initialization step.
  The inductive step follows immediately from our definition of MWU updates (Algorithm~\ref{alg:MWU}) and the Bayes' rule.
\end{proof}

The correctness of Algorithm~\ref{alg:bin-search-expected} follows directly from Lemma~\ref{lem:bayesian-posterior} and the stopping condition.
To prove Theorem~\ref{thm:bin:expected} it remains to analyze the expected length of the strategy.

\begin{lemma}
Algorithm~\ref{alg:bin-search-expected} terminates after the expected number of at most $\frac{1}{I(p)} (\log_2 n + \log_2 \frac{1}{\delta} + 3)$ steps.
\end{lemma}
\begin{proof}
We measure the progress at any given step by a random variable $\zeta_t = \log_2 \vweight_t(\target)$.
If the answer in step $t+1$ is erroneous, then $\zeta_{t+1} = \zeta_t + 1 + \log_2 p$ and otherwise $\zeta_{t+1} = \zeta_t + 1 + \log_2(1-p)$.

For the sake of bounding the number of steps of the algorithm, we consider it running indefinitely.
Let $Q$ be the smallest integer such that $\frac{\vweight_Q(\target)}{\vweight_Q(V\setminus \{\target\})} \ge \frac{1-\delta}{\delta}$, that is $\target$ is $(1-\delta)$-heavy in round $Q$.
Obviously $Q$ upper bounds the strategy length.
By the definition, $Q$ is a stopping time.

First, let us show that $\E[Q]$ is finite.
To this end, let $\xi_t = \log_2 \frac{\vweight_t(\target)}{\vweight_t(V\setminus \{\target\})}$ and $X_t = \xi_t - \xi_{t-1}$ for $t~\ge~1$.
Observe that $X_t$'s are i.i.d and $\E[X_t] = p(-\log_2 \Gamma) + (1-p)\log_2 \Gamma = (1-2p)\log_2 \Gamma > 0$.
Therefore, using Lemma~\ref{lem:random-walk} for sequence $X_t$, with $\ell = -\log \Gamma$, $r = \log \Gamma$ and $T = \log_2 \frac{1-\delta}{\delta}-\log_2 \frac{\vweight_0(\target)}{\vweight_0(V\setminus \{\target\})}$, we indeed obtain $\E[Q] < \infty$.

Let $Q_i$ for any positive integer $i$ be smallest value such that $\vweight_{Q_i}(\target) \ge \frac{n}{\delta}\cdot 2^i$ (for completeness of notation, we define $Q_0 = 0$).
Consider an event $\{Q > Q_i\}$. It means that in round $Q_i$ the target $\target$ is not yet $(1-\delta)$-heavy.
Hence, $\vweight_{Q_i}(V \setminus \{\target\}) > \frac{\delta}{1-\delta}\vweight_{Q_i}(\target) \ge \delta \vweight_{Q_i}(\target) \ge  n \cdot 2^i$.
But we know from Lemma~\ref{lem:nontargetdrop} that $\E[\vweight_{Q_i}(V \setminus \{\target\})] \leq \E[\vweight_0(V \setminus \{\target\})] \leq n$.
Using Markov's inequality we conclude that $\pr[Q > Q_i] \leq \pr[\vweight_{Q_i}(V \setminus \{\target\}) > n \cdot 2^i] \leq 2^{-i}$.

Additionally, since $\vweight_{Q_i-1}(\target) < \frac{n}{\delta}\cdot 2^i$, there is $\vweight_{Q_i}(\target) < 2\frac{n}{\delta}\cdot 2^i$.
We can then bound
\begin{eqnarray*}
\E[\zeta_Q] &<& \sum_{i=1}^{\infty} \Pr[Q_{i-1} < Q \le Q_i] \log_2 (2 \cdot \frac{n}{\delta}2^i) \\
 &=& \log_2 (2 \cdot \frac{n}{\delta}) + \sum_{i=1}^{\infty} \Pr[Q_{i-1} < Q \le Q_i] \cdot i\\
 &=& \log_2(2 \frac{n}{\delta}) + \sum_{i=1}^{\infty} \Pr[Q>Q_{i-1}]\\
 &\le & 1 + \log_2 n + \log_2 \frac{1}{\delta} + \sum_{i=0}^{\infty} 2^{-i}\\
 &\le & \log_2 n + \log_2 \frac{1}{\delta} + 3.
\end{eqnarray*}

Let $Y_t = \zeta_t - \zeta_{t-1}$.
Obviously, $Y_i$'s are independent.
We have already established that $Q$ is a stopping time and that $\E[Q] < \infty$.
This means we can employ the Wald's identity (Proposition \ref{thm:wald}) to obtain (using $\zeta_0 = 0$)
$\E[\zeta_Q]= \E[Y_1 + \cdots + Y_Q] = \E[Q]I(p).$
Therefore,
$$\log_2 n + \log_2 \frac{1}{\delta} + 3 \ge \E[\zeta_Q] = \E[Q] I(p).\qedhere$$
\end{proof}

\subsection{Proof of Theorem~\ref{thm:bin:worst-case} (Worst-case strategy length)} \label{sec:bin-search:worst-case}
Take $Q$ to be the smallest positive integer for which

\begin{equation}\label{eq:bin-adv}
  I(p) Q > \log_2 n + \log_2 \frac{2}{\delta} + \sqrt{2p(1-p) Q \ln \frac{2}{\delta}} \log_2 \Gamma.
\end{equation}

The $Q$ gives our strategy length (see Algorithm~\ref{alg:binary-worst-case}).
To prove Theorem~\ref{thm:bin:worst-case} we bound the strategy length and the failure probability (see Lemma~\ref{lem:bin-adv} below). The algorithm essentially remains the same except for the stop condition (cf. Algorithm~\ref{alg:binary-worst-case}).
\begin{algoBox}{0.75}
\begin{algo} \label{alg:binary-worst-case} \textup{(Worst-case strategy length for binary search.)}

\medskip
Initialization: $\vweight_0(v_i) \leftarrow 1$ for each element $v_i$.\\
Execute Algorithm~\ref{alg:bin-search-basic} for exactly $Q$ steps with $Q$ as in~\eqref{eq:bin-adv}.\\
Return the heaviest element.
\end{algo}
\end{algoBox}

\begin{lemma} \label{lem:bin-adv}
For any $0 < \delta < 1$, Algorithm~\ref{alg:binary-worst-case} finds the target correctly with probability at least $1-\delta$ in $\frac{1}{I(p)} \left( \log_2 n + \bigo(\log\delta^{-1}) + \bigo(\sqrt{\log n \log\delta^{-1}}) \right)$ steps.
\end{lemma}
\begin{proof}

  Firstly, observe that solving~\eqref{eq:bin-adv} for $Q$ using Lemma~\ref{quadratic} with parameters $a = I(p)$, $b=\log_2 n + \log_2 \frac{2}{\delta}$ and $c=2p(1-p)(\log_2 \Gamma)^2 \ln \frac{2}{\delta}$ yields
$$
Q = \frac{1}{I(p)}\left(\log_2 n +  \log_2 \frac{2}{\delta} +  \bigo\left(\ln \frac{2}{\delta} + \sqrt{\log_2 n +  \log_2 \frac{2}{\delta} } \sqrt{\ln \frac{2}{\delta}}\right)\right)
$$
where we have used, by Proposition \ref{prop:info-and-gamma-bounds}, that $\frac{p(1-p)(\log_2 \Gamma)^2}{I(p)} = \bigo(1)$. The above equation can be simplified to
$$
  Q = \frac{1}{I(p)} \left( \log_2 n + \bigo(\log \delta^{-1}) + \bigo(\sqrt{\log \delta^{-1} \log n}) \right).
$$

It remains to prove correctness.
By Lemma~\ref{lem:target-bound}, with probability at least $1-\delta/2$ we have
\begin{equation}\label{eq2137}
\vweight_{Q}(\target) \ge  \Gamma^{- \sqrt{2p(1-p) Q \ln 2/\delta}} 2^{I(p) Q}.
\end{equation}

From Lemma~\ref{lem:nontargetdrop} we also get $\E[\vweight_Q(V \setminus \{\target\})] \leq \E[\vweight_0(V \setminus \{\target\})] \leq n$.
Therefore, by Markov's inequality with probability $1-\delta/2$ we have
\begin{equation} \label{eq2}
  \vweight_Q(V \setminus \{\target\}) \le \frac{2n}{\delta}.
\end{equation}

It remains to observe that our definition of $Q$ (Equation~\eqref{eq:bin-adv}) is equivalent to
$$
  \Gamma^{- \sqrt{2p(1-p) Q \ln 2/\delta}} 2^{I(p) Q} > \frac{2n}{\delta}.
$$

Thus, by the union bound applied to \eqref{eq2137} and \eqref{eq2}, with probability at least $1-\delta$ we get $\vweight_Q(\target) > \vweight_Q(V \setminus \{\target\})$.
Then, $\target$ is the heaviest element and Algorithm~\ref{alg:binary-worst-case} returns it.
\end{proof}

\section{Graph Searching Algorithm} \label{sec:graph-search}
Denoting by $d(u,v)$ the graph distance between $u$ and $v$, i.e., the length of the shortest path between these vertices, a \emph{median} of the graph is a vertex
$$q = \arg \min_{v \in V} \sum_{u \in V} d(u,v) \cdot \vweight(u).$$
For a query $v$ and a reply $u$, let $C(v, u) = \{x \in V | u $
lies on some shortest path from $v \text{ to } x \}$ for $v \neq u$, and $C(v, v) = \{v\}$.
We note a fundamental bisection property of a median:
\begin{lemma}[c.f. \cite{Emamjomeh-Zadeh:2015aa} Lemma 4] \label{lem:median-bisection}
If $q$ is a median, then $\max_{u \in N(q)} \vweight(C(q, u))  \le \vweight(V)/2$.
\end{lemma}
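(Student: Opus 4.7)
The plan is to argue by contradiction: assume some neighbor $u \in N(q)$ satisfies $\vweight(C(q,u)) > \vweight(V)/2$, and show that then $u$ itself would be a strictly better candidate for the median than $q$, contradicting the defining property of $q$ as a minimizer of $\Phi(v) \defeq \sum_{x \in V} d(x,v)\,\vweight(x)$.

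To produce this contradiction I would compare $\Phi(u)$ and $\Phi(q)$ term by term across all $x \in V$. Two elementary facts do the work. First, if $x \in C(q,u)$ then by definition $u$ lies on some shortest $q$--$x$ path, which forces $d(u,x) = d(q,x) - 1$. Second, for \emph{every} $x \in V$, the triangle inequality together with $d(q,u)=1$ yields only $d(u,x) \le d(q,x) + 1$. Using the tight equality on $C(q,u)$ and the looser upper bound on its complement, each $x \in C(q,u)$ contributes at most $-\vweight(x)$ to $\Phi(u)-\Phi(q)$, and each $x \notin C(q,u)$ contributes at most $+\vweight(x)$. Summation gives
$$\Phi(u) - \Phi(q) \;\le\; -\vweight(C(q,u)) + \vweight(V\setminus C(q,u)) \;=\; \vweight(V) - 2\,\vweight(C(q,u)).$$
Under the contradictory hypothesis $\vweight(C(q,u)) > \vweight(V)/2$ the right-hand side is strictly negative, so $\Phi(u) < \Phi(q)$, contradicting the minimality of $q$. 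Hence every neighbor $u \in N(q)$ must satisfy $\vweight(C(q,u)) \le \vweight(V)/2$, which is exactly the claimed bisection bound.

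I do not foresee any substantial obstacle; the only point requiring care is the bookkeeping at the two distinguished vertices. One verifies $q\notin C(q,u)$ (no shortest $q$--$q$ path uses $u$) so $q$ lies in the ``$+\vweight(x)$'' bucket, and $u\in C(q,u)$ so $u$ lies in the ``$-\vweight(x)$'' bucket. The trivial distance values $d(u,q)=1=d(q,q)+1$ and $d(u,u)=0=d(q,u)-1$ are consistent with the two bounds applied, so nothing is double-counted or mishandled at these extreme vertices. The rest is a direct summation and matches the classical $1$-median characterization underlying the cited proof in \cite{Emamjomeh-Zadeh:2015aa}.
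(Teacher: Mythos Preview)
Your proof is correct and follows essentially the same approach as the paper's own proof: assume a neighbor $u$ with $\vweight(C(q,u)) > \vweight(V)/2$, bound $\Phi(u) \le \Phi(q) - \vweight(C(q,u)) + \vweight(V\setminus C(q,u))$ via the one-step distance change, and conclude $\Phi(u) < \Phi(q)$, contradicting minimality. Your version is slightly more explicit about the distance bookkeeping at $q$ and $u$, but the argument is otherwise identical.
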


\begin{proof}
  Denote for brevity $\Phi(x)=\sum_{v \in V} d(x,v) \cdot \vweight(v)$ for any $x\in V$.
  Suppose towards the contradiction that $C(q, u) > \vweight(V)/2$ for some $u \in N(q)$.
  Observe that $\Phi(u) \leq \Phi(q) - \vweight(C(q, u)) + \vweight(V \setminus C(q, u))$ since by moving from $q$ to $u$ we get closer to all vertices in $C(q, u)$.
  But $\Phi(q) - \vweight(C(q, u)) + \vweight(V \setminus C(q, u)) = \Phi(q) + \vweight(V) - 2\vweight(C(q, u)) < \Phi(q)$ by our assumption, hence $\Phi(u) < \Phi(v)$, which yields a contradiction.
  \end{proof}

We now analyze how the weights behave when in each step a median is queried and the MWU updates are made.
This analysis is common for both graph searching algorithms given later.
Essentially we prove that, in an amortized way, the total weight (with the heaviest vertex excluded) remains the same in each step.
In absence of heavy vertices we use Lemma~\ref{lem21}.
Lemmas~\ref{lem22} and~\ref{lem23} refer to an interval of queries to the same heavy vertex $x$.
If the interval ends (cf. Lemma~\ref{lem22}), then the desired weight drop can be claimed at its end.
For this, informally speaking, the crucial fact is that $x$ received many no-answers during this interval.
If a strategy is at a step that is within such interval, then Lemma~\ref{lem23} is used to bound the total weight with the weight of $x$ excluded.
Hence, at any point of the strategy the weight behaves appropriately, as summarized in Lemma~\ref{lem:xT}.

\begin{lemma}[see also \cite{DereniowskiTUW19, Emamjomeh-Zadeh:2015aa}]
\label{lem21}
If in a step $t$ there is no heavy vertex, then
$\vweight_{t+1}(V) \le  \vweight_t(V).$
\end{lemma}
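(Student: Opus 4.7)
The plan is to query the median $q$ at step $t$ and then do a case analysis on the reply; the lemma will follow deterministically in each case (no probabilistic argument needed), since the absence of a heavy vertex lets us bound both sides of the bisection cleanly.

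First I would set up the Bayesian-update computation in a uniform way. For any query $q$ at step $t$, let $A_t$ denote the total weight of vertices compatible with the reply; then
\begin{equation*}
w_{t+1}(V) = (1-p)\,A_t + p\,(w_t(V) - A_t) = p\,w_t(V) + (1-2p)\,A_t.
\end{equation*}
Since $1-2p>0$, it suffices in every case to show $A_t \le w_t(V)/2$, which gives $w_{t+1}(V) \le p\,w_t(V) + (1-2p)\,w_t(V)/2 = w_t(V)/2$.

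Second, I would split on the reply to the median query. If the answer is a no-answer pointing to a neighbor $u$ of $q$, then $A_t = w_t(C(q,u))$, and Lemma~\ref{lem:median-bisection} applied to the current weights yields $A_t \le w_t(V)/2$ directly. If the answer is a yes-answer, then the only compatible vertex is $q$ itself, so $A_t = w_t(q)$; here the no-heavy-vertex hypothesis gives $w_t(q) \le w_t(V)/2$, and we again get $A_t \le w_t(V)/2$. (The ``heavy no-answer'' corner case from the model specifics cannot arise, since by assumption $q$ is not heavy.) Combining with the displayed identity completes the proof.

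The main obstacle is really only conceptual rather than technical: one must notice that the no-heavy-vertex assumption and Lemma~\ref{lem:median-bisection} handle the two different types of reply on an equal footing, because in both cases the weight of the compatible set is at most half the total. Once this unifying viewpoint is adopted, the bound becomes a single line of algebra and holds deterministically per step rather than only in expectation.
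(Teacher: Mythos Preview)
Your proof is correct and follows essentially the same approach as the paper: query the median, split on yes- versus no-answer, use Lemma~\ref{lem:median-bisection} for the no-answer case and the no-heavy-vertex hypothesis for the yes-answer case, and then apply the Bayesian-update identity. Your write-up is in fact a bit more explicit than the paper's (you spell out the identity $w_{t+1}(V)=p\,w_t(V)+(1-2p)A_t$ and explicitly invoke the no-heavy-vertex assumption for the yes-answer case), but the argument is the same.
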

\begin{proof}
Let $q$ be a query and $u$ an answer in step $t$.
Note that if there is no heavy vertex, then $C(q, u)$ is the set of vertices compatible with the reply.
If $q \neq u$ then $\vweight_{t}(C(q, u))\leq\vweight_{t}(V)/2$ by Lemma \ref{lem:median-bisection} and in case $q = u$ we have $C(q, u)=\{q\}$ and thus the same bound holds.
Then in both cases,
$\vweight_{t+1}(V) =  2(1-p)\cdot\vweight_{t}(C(q, u))+2p\cdot\vweight_{t}(V \setminus C(q, u))   \leq \vweight_{t}(V)$.
\end{proof}

\begin{lemma}[see also \cite{DereniowskiTUW19}]
\label{lem22}
Consider an interval $I=\{\T,\T+1,\ldots,\T+k-1\}$ of $k$ queries such that some $x$ is heavy in each query in $I$ and is not heavy after the last query in the sequence. Then
$\vweight_{\T+k}(V) \le  \vweight_{\T}(V).$
\end{lemma}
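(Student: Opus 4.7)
The key structural observation is that when $x$ is heavy at step $\T+i$, no other vertex can satisfy the median bisection property (cf.\ Lemma~\ref{lem:median-bisection}), so the algorithm must query $x$ at every step in $I$. I then track the ratio $\rho_i := \vweight_{\T+i}(x)/\vweight_{\T+i}(V\setminus\{x\})$: a yes-answer multiplies $\rho$ by $\Gamma = (1-p)/p$, while a no-answer to the heavy query $x$ (read as ``target is not $x$'') multiplies $\rho$ by $\Gamma^{-1}$. The hypotheses then translate to $\rho_0,\ldots,\rho_{k-1}\ge 1$ and $\rho_k<1$. Since a yes-answer pushes $\rho$ upward, the last answer must be a no, so $\rho_{k-1}=\Gamma\rho_k \ge 1$ places $\rho_k \in [\Gamma^{-1},1)$.

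Writing $n_y,n_n$ for the yes/no counts in $I$ and $d := n_n - n_y \ge 1$, one has $\rho_k = \rho_0\Gamma^{-d}$ and $\rho_{k-1} = \rho_0\Gamma^{1-d}$, and the latter being $\ge 1$ yields the crucial lower bound $\rho_0 \ge \Gamma^{d-1}$. Next, using the identity $\vweight(V)=(1+\rho)\,\vweight(V\setminus\{x\})$ and the elementary evolution of $\vweight(V\setminus\{x\})$ (factor $p$ on yes, $(1-p)$ on no), one obtains
\[
\frac{\vweight_{\T+k}(V)}{\vweight_\T(V)} \;=\; \frac{1+\rho_k}{1+\rho_0}\cdot p^{n_y}(1-p)^{n_n}.
\]
Substituting $\rho_k=\rho_0\Gamma^{-d}$ and setting $X:=2p$, $Y:=2(1-p)$, a short algebraic manipulation shows the target bound $\le 2^{-k}$ is implied by $Y^d + \rho_0 X^d \le 1 + \rho_0$, i.e., by $\rho_0 \ge (Y^d-1)/(1-X^d)$.

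Plugging the lower bound $\rho_0 \ge \Gamma^{d-1} = (Y/X)^{d-1}$ into this requirement and clearing denominators reduces it to $X^{d-1}+Y^{d-1} \ge 2(XY)^{d-1}$. Since $XY=4p(1-p)\le 1$, AM-GM gives $X^{d-1}+Y^{d-1} \ge 2\sqrt{(XY)^{d-1}} \ge 2(XY)^{d-1}$, completing the argument. The main obstacle is recognizing that the useful lower bound on $\rho_0$ is the sharp $\Gamma^{d-1}$ extracted from $\rho_{k-1}\ge 1$; the naive $\rho_0\ge 1$ is too weak to dominate the factor $Y^d$ for large $d$, and only once this precise constraint is identified does the AM-GM step close the gap, using the optimality condition $4p(1-p)\le 1$.
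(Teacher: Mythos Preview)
Your proof is correct and takes a genuinely different route from the paper's.

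The paper argues combinatorially: it pairs each yes-answer with a distinct no-answer (possible since $n_n\ge n_y$), notes that any such pair scales \emph{every} vertex weight by $p(1-p)\le\tfrac14$, and then asserts that the $d=n_n-n_y$ excess no-answers each shrink $\vweight(V)$ by a factor $\le\tfrac12$ via heaviness of $x$. You instead track the ratio $\rho$ explicitly, write the total-weight ratio in closed form as $\frac{1+\rho_k}{1+\rho_0}\,p^{n_y}(1-p)^{n_n}$, reduce to the inequality $Y^d+\rho_0 X^d\le 1+\rho_0$, and close with AM--GM using $XY=4p(1-p)\le1$.

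What your approach buys is rigor at the one delicate point. The paper's ``excess no-answers each halve $\vweight(V)$'' step needs $x$ to be heavy when those no-answers are processed; after reordering so that the $n_y$ paired steps come first, this means one needs $\rho_0\Gamma^{-j}\ge1$ for $j=0,\dots,d-1$, i.e.\ exactly your bound $\rho_0\ge\Gamma^{d-1}$. The paper never isolates this, and the weaker $\rho_0\ge1$ is genuinely insufficient (e.g.\ $d=2$, $p=0.4$, $\rho_0=1$ gives $w_\tau(x)p^2+w_\tau(V\setminus\{x\})(1-p)^2=0.52\,w_\tau(V)>w_\tau(V)/4$). So your extraction of $\rho_0\ge\Gamma^{d-1}$ from $\rho_{k-1}\ge1$ together with ``last answer must be a no'' is really the crux, and once it is in hand the paper's pairing picture also finishes cleanly: process $d$ no-answers first (each keeps $x$ heavy and halves $\vweight(V)$), then apply the $n_y$ yes/no pairs, each contributing $p(1-p)\le\tfrac14$.
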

\begin{proof}
  First note that in each query in the interval $I$ the queried vertex is $x$.
  Consider any two queries $i$ and $j$ in $I$ such that they receive different replies.
  The contribution of these two queries is that together they scale down each weight multiplicatively by $2p\cdot2(1-p)\leq1$.
  Also, for a single no-answer in a query $i \in I$ we get
  $\vweight_{i+1}(V)=2p\vweight_i(x)+2(1-p)\vweight_i(V\setminus\{x\}) \leq \vweight_i(V)$
  because $\vweight_i(x)\geq\vweight_i(V\setminus\{x\})$ for the heavy vertex $x$.
  By assumption, the number of no-answers is at least the number of yes-answers in $I$.
  Thus, the overall weight drop is as claimed in the lemma.
  \end{proof}

 \begin{lemma}
 \label{lem23}
 Consider an interval $I=\{\T,\T+1,\ldots,\T+k-1\}$ of $k$ queries such that some $x$ is heavy in each query in $I$, and $x$ remains heavy after the last query in $I$.
 Then $$\vweight_{\T+k}(V \setminus \{x\}) \le  \vweight_{\T}(V).$$
 \end{lemma}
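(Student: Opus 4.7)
The plan is to exploit the fact that, whenever $x$ is heavy, the median is forced to be $x$ itself. Indeed, for any $q \ne x$, the neighbor $u$ of $q$ on a shortest $q$--$x$ path satisfies $x \in C(q,u)$, so $w(C(q,u)) \ge w(x) > w(V)/2$, contradicting Lemma~\ref{lem:median-bisection}. Hence every query in the interval $I$ is $x$, and under the heavy-vertex reply convention the Bayesian update reduces to a uniform rescaling: a yes-answer multiplies $w(x)$ by $1-p$ and every $w(v)$ with $v \ne x$ by $p$, while a no-answer swaps the roles of $p$ and $1-p$.

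Letting $a$ and $b$ denote the number of yes- and no-answers (so $a+b=k$), this yields closed forms
\[
w_{\T+k}(x) = w_\T(x)(1-p)^a p^b, \qquad w_{\T+k}(V\setminus\{x\}) = w_\T(V\setminus\{x\}) \cdot p^a (1-p)^b.
\]
The key algebraic move I plan to use is simply to multiply these two identities; the $p$ and $1-p$ factors pair perfectly, giving
\[
w_{\T+k}(x) \cdot w_{\T+k}(V\setminus\{x\}) = w_\T(x) \cdot w_\T(V\setminus\{x\}) \cdot (p(1-p))^k,
\]
which eliminates all dependence on the split between $a$ and $b$.

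Both heaviness hypotheses then feed in cleanly. From $x$ being heavy at step $\T$ I use $w_\T(V\setminus\{x\}) \le w_\T(V)/2$ together with the trivial bound $w_\T(x) \le w_\T(V)$; combined with $p(1-p) \le 1/4$, this bounds the right-hand side by $w_\T(V)^2 / (2 \cdot 4^k)$. From $x$ still being heavy after the last query I get $w_{\T+k}(V\setminus\{x\}) \le w_{\T+k}(x)$, so the left-hand side is at least $w_{\T+k}(V\setminus\{x\})^2$. Taking square roots then produces $w_{\T+k}(V\setminus\{x\}) \le w_\T(V)/(\sqrt{2}\cdot 2^k)$, which is (slightly) stronger than the claim.

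The only place one might expect difficulty is controlling $a$ and $b$ individually, since neither hypothesis pins them down separately. The product identity sidesteps this entirely, which is the main conceptual step; everything else is routine.
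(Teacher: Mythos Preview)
Your proof is correct, but it takes a genuinely different route from the paper's. The paper argues by a case split on whether the number of yes-answers~$a$ is at least the number of no-answers~$b$ or not: when $a\ge b$ it bounds $w_{\T+k}(V\setminus\{x\})=p^a(1-p)^b\,w_\T(V\setminus\{x\})\le 2^{-k}w_\T(V)$ directly, and when $a<b$ it uses heaviness at step $\T+k$ to pass to $w_{\T+k}(x)=p^b(1-p)^a\,w_\T(x)\le 2^{-k}w_\T(V)$. Your product trick collapses these two cases into one symmetric identity depending only on $(p(1-p))^k$, at the cost of using heaviness at \emph{both} endpoints (the paper's weight estimate uses heaviness at $\T$ only to identify the queried vertex). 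What you buy is a cleaner argument with no case distinction and a marginally sharper constant (the extra factor $1/\sqrt{2}$); what the paper's split buys is a slightly more transparent link between the sign of $a-b$ and which of the two closed forms is tight.
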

 \begin{proof}
 Recall that in each query in the interval $I$, the queried vertex is $x$.
 Assume that there were $a$ yes-answers in $I$ and $b$ no-answers, with $a+b=k$.
 If $a \ge b$, then $\vweight_{\T+k}(V \setminus \{x\}) = (2p)^a (2(1-p))^b \vweight_{\T}(V \setminus \{x\}) \le \vweight_{\T}(V \setminus \{x\}) \le \vweight_{\T}(V)$.
 If $a < b$, then we bound as follows:
 $\vweight_{\T+k}(V \setminus \{x\}) \le \vweight_{\T+k}(x) = (2p)^b (2(1-p))^a \vweight_{\T}(x) \le \vweight_{\T}(x) \le \vweight_{\T}(V)$.
 \end{proof}

The bound in the next lemma immediately follows from Lemmas~\ref{lem21},~\ref{lem22} and~\ref{lem23}.
We say that an element $v$ is \emph{heaviest} if $\vweight(v)\geq\vweight(u)$ for each $u\in V$.
For each step $i$, we denote by $x_i$ a heaviest vertex at this step, breaking ties arbitrarily.

\begin{lemma} \label{lem:xT}
$\vweight_{\T}(V \setminus \{x_{\T}\}) \le \vweight_0(V)=n.$
\end{lemma}
\begin{proof}
We consider the first $\tau$ queries and observe that they can be partitioned into a disjoint union of maximal intervals in which either there is a heavy vertex present (in the whole interval) or there is no heavy vertex (in the whole interval).
We apply Lemma~\ref{lem21} for intervals with no heavy vertex and Lemmas~\ref{lem22},~\ref{lem23} otherwise (note that Lemma~\ref{lem23} can be applied only to the last interval.
The latter happens only when there exists a heavy vertex after we perform all $\tau$ queries).
\end{proof}

\subsection{Proof of Theorem~\ref{thm:graph:worst-case} (Worst-case strategy length)}
In this section we prove Theorem~\ref{thm:graph:worst-case}.
Take $Q$ to be the smallest positive integer for which
\begin{equation}\label{eq:Q-definition}
I(p)Q \ge \log_2 n + \sqrt{2p(1-p)Q \ln \delta^{-1}} \log_2 \Gamma.
\end{equation}

The $Q$ gives our strategy length (see Algorithm~\ref{alg:graph-adversarial}).
To prove Theorem~\ref{thm:graph:worst-case} we bound the strategy length and the failure probability (see Lemma~\ref{lem:graph:worst-case} below).

\begin{algoBox}{1.0}
\begin{algo} \label{alg:graph-adversarial} \textup{(Worst-case strategy length for graph search.)}

\medskip
Initialization: $\vweight_0(v)=1$ for each $v\in V$.\\
In each step: query the median and perform the MWU updates (Algorithm~\ref{alg:MWU}).\\
Stop condition: do exactly $Q$ queries with $Q$ defined by \eqref{eq:Q-definition} and return the heaviest vertex.
\end{algo}
\end{algoBox}

\begin{lemma} \label{lem:graph:worst-case}
For any $0 < \delta < 1$, Algorithm~\ref{alg:graph-adversarial} finds the target correctly with probability at least $1-\delta$ in $\frac{1}{I(p)} \left( \log_2 n + \bigo(\log\delta^{-1}) + \bigo(\sqrt{\log n \log\delta^{-1}}) \right)$ steps.
\end{lemma}
\begin{proof}

The proof is very similar to that of Lemma \ref{lem:bin-adv} (worst-case noisy binary search).
It is actually simpler, thanks to the fact that Lemma \ref{lem:xT} gives even stronger bound than Lemma \ref{lem:nontargetdrop}.

Using Lemma~\ref{quadratic} we solve \eqref{eq:Q-definition} for $Q$.
We bound the result further with Proposition \ref{prop:info-and-gamma-bounds}:
\begin{equation} \label{eq:Q-worst-case}
Q = \frac{\log_2 n + \bigo(\sqrt{\log n \log \delta^{-1}}) + \bigo(\log \delta^{-1})}{I(p)}.
\end{equation}
By Lemma~\ref{lem:target-bound} and the definition of $Q$ in~\eqref{eq:Q-definition} it holds with probability $1-\delta$
$$\log_2 \vweight_Q(\target) \ge - \sqrt{2p(1-p) Q\ln \delta^{-1}} \log_2 \Gamma + I(p)Q \ge \log_2 n \ge \log_2 \vweight_Q(V \setminus \{x_Q\}),$$
where the last inequality is due to Lemma~\ref{lem:xT}.
Since the weights are non-negative at all times, the only way for this to happen is to have $\target = x_Q$, that is the target being found correctly.
\end{proof}

\subsection{Proof of Theorem~\ref{thm:graph:expected} (The expected strategy length)}

The solution for this case (given in Algorithm~\ref{alg:graph-expected}) paraphrases Algorithm~\ref{alg:bin-search-basic} except for the proper adjustement of the confidence threshold.
\begin{algoBox}{1.0}
\begin{algo} \label{alg:graph-expected} \textup{(The expected strategy length for graph search.)}

\medskip

Let $\delta' = c(\delta^2 \cdot (\log n+\log 1/\delta)^{-2})$ for small enough constant $c>0$.

Initialization: $\vweight_0(v)=1$ for each $v\in V$.\\
In each step: query the median and perform the MWU updates (Algorithm~\ref{alg:MWU}).\\
Stop condition: if for any $v$ in some step $\T$ it holds $\frac{\vweight_{\T}(v)}{\vweight_{\T}(V)} \ge 1-\delta'$, then return $v$.
\end{algo}
\end{algoBox}

\begin{lemma} \label{lem:graph-lv-distr}
Algorithm~\ref{alg:graph-expected} stops after the expected number of at most $\frac{1}{I(p)} (\log_2 n +  \log_2 \frac{1}{\delta'} + 1)$ steps.
\end{lemma}

\begin{proof}
We argue that within the promised expected number of steps, target $\target$ reaches the threshold weight. This clearly upperbounds the runtime.
We measure the progress at any given moment by a random variable $\zeta_t = \log_2 \vweight_t(\target)$.
Observe that if the reply is erroneous in a step $t+1$, then $\zeta_{t+1} = \zeta_t + 1 + \log_2 p$, and if it is correct, then $\zeta_{t+1} = \zeta_t + 1 + \log_2 (1-p)$.

For the sake of bounding the number of steps of the algorithm, we assume it is simulated indefinitely.
Let $Q$ be the smallest integer such that $\zeta_Q \ge \log_2  n + \log_2 \frac{1-\delta'}{\delta'}$.

By Lemma \ref{lem:xT} we have that $\zeta_Q = \log_2 \vweight_Q(\target) \le \log_2 \frac{\vweight_Q(\target)}{\vweight(V \setminus \{x_Q\})/n}$, thus $\frac{\vweight_Q(\target)}{\vweight(V \setminus \{x_Q\})} \ge \frac{1-\delta'}{\delta'} > 1$ since w.l.o.g. $\delta' < 1/2$.
But if for any $t$ there is $\frac{\vweight_t(\target)}{\vweight_t(V \setminus \{x_t\})} > 1$, then $x_t = \target$, since $\vweight_t(\target) > \vweight_t(V \setminus \{x_t\})$ implies $\target \not\in V \setminus \{x_t\}$.
Thus we deduce that $x_Q = \target$.
Additionally, from $\frac{\vweight_Q(\target)}{\vweight(V \setminus \{\target\})} \ge \frac{1-\delta'}{\delta'}$ we get that $\target$ is $(1-\delta')$-heavy, hence $Q$ bounds the strategy length.

From $\zeta_{t+1} \in \{\zeta_t + 1 + \log_2 p, \zeta_t + 1 + \log_2 (1-p)\}$ and the minimality of $Q$ we deduce $\zeta_Q \le \log_2 n + \log_2 \frac{1-\delta'}{\delta'} + 1 + \log_2(1-p) \le \log_2 n + \log_2 \frac1{\delta'} + 1$.
In particular
\begin{equation} \label{eq:zetaQ}
\E[\zeta_Q] \leq \log_2 n + \log_2 \frac1{\delta'} + 1.
\end{equation}
Let $X_t = \zeta_t - \zeta_{t-1}$ and observe that $\E[X_t] = p(1 + \log_2 p) + (1-p)(1 +\log_2 (1-p)) = I(p) > 0$.
Also, $X_i$'s are independent and $Q$ is a stopping time.
Finally, we have $\E[Q] < \infty$ from Lemma~\ref{lem:random-walk} \footnote{By plugging in $\ell = 1 + \log_2 p$, $r = 1 + \log_2 (1-p)$ and $T = \log_2 \frac{1-\delta'}{\delta'} + \log_2 n$.}.
Therefore, we meet all conditions of the Wald's identity (Proposition \ref{thm:wald}) and we get (since $\zeta_0 = 0$)
$\E[\zeta_Q]= \E[X_1 + \cdots + X_Q] = \E[Q]I(p).$
Thus, by~\eqref{eq:zetaQ} we have
$1 + \log_2 \frac1{\delta'} + \log_2 n \ge \E[\zeta_Q] = \E[Q] I(p),$
from which the claim follows.
\end{proof}

\begin{lemma} \label{lem:graph:lv-adversarial}
Algorithm~\ref{alg:graph-expected} finds the target correctly with probability at least $1-\delta$.
\end{lemma}

\begin{proof}
  We first show correctness.
  Denote by $A \le \frac{\log \frac{1-\delta'}{\delta'}}{\log \frac{1-p}{p}}+1$ the number of yes-answers required to go from a vertex being $1/2$-heavy to being $(1-\delta')$-heavy. For now assume that $A \ge 2$, we will deal with the other case later.
  For a non-target vertex $u$ to be declared by the algorithm as the target, it has to observe a suffix of the strategy being a random walk on a 1-dimensional discrete grid $[0, \ldots, A]$ and transition probabilities $p$ for $i \to i+1$ and $1-p$ for $i \to i-1$.
  We consider a random walk starting at position $A/2$ and ending when reaching either $0$ or $A$ and call it a \emph{subphase} (w.l.o.g. assume that $A$ is even).
  Any execution of the algorithm can be partitioned into maximal in terms of containment, disjoint subphases.
  Each subphase starts when one particular heavy vertex $v$ receives $A/2$ more yes-answers than no-answers within the interval in which $v$ is heavy.
  Then, a subphase ends when either the algorithm declares $v$ to be the target or $v$ stops being heavy.
  By the standard analysis of the gamblers ruin problem, each subphase (where the heavy vertex is not the target) has failure probability $\delta''=  \frac{1}{1+\left(\frac{1-p}{p}\right)^{A/2}} \le \frac{1}{1+\sqrt{\frac{1-\delta'}{\delta'}}} = O(\sqrt{\delta'})$.
  Let us denote by a random variable $D$ the number of subphases in the execution of the algorithm. Let $F_i$ be the  length of $i$-th subphase.
  By the standard analysis of the gamblers ruin problem,
  $$\E[F_i] = \frac{A/2}{1-2p} - \frac{A}{1-2p}\frac{1}{1 + (\frac{1-p}{p})^{A/2}} \ge \frac{A/2}{1-2p}\left(1 - \frac{2}{1 + \sqrt{\frac{1-\delta'}{\delta'}}}\right) = \Omega\left(\frac{1}{\varepsilon^2}\right),$$
  where the asymptotic holds since w.l.o.g. $\delta' < 1/3$, and also since if $\varepsilon < 1/3$, then $A = \Omega(1/\varepsilon)$, and otherwise $A \ge 2 = \Omega(1/\varepsilon)$.
  Let $F = F_1 + \cdots + F_D$ be the total length of all subphases.
  Observe that $D$ is a stopping time, hence we have $\E[F] = \E[D] \cdot \Omega(\frac{1}{\varepsilon^2})$ by Proposition \ref{thm:wald}.
  By Lemma~\ref{lem:graph-lv-distr}, $\E[Q]=\bigo(\varepsilon^{-2}(\log n+\log\delta'^{-1}))$ holds for the strategy length $Q$.
  Since $F \le Q$, $\E[D] = \bigo(\log n+\log 1/\delta') = \bigo(\log n + \log 1/\delta)$.

  By application of the union bound, the error probability for the whole procedure is bounded by $\delta'' \E[D] \le \delta$ for appropriately chosen constant in the definition of $\delta'$.

  We now deal with case of $A\le1$. This requires $p < \delta'$, and $\varepsilon > 1/3$ (since if $\varepsilon<1/3$, appropriate choice of constant in $\delta'$ enforces $A\ge2$) and so the expected strategy length is $\E[Q] = O(\log n + \log 1/\delta)$. By the union bound, algorithm receives a single erroneous response with probability at most $p \E[Q] \le \delta' \E[Q] = O(\delta^2/(\log n + \log 1/\delta)) \le \delta$.
  \end{proof}


\bibliography{bib}

\begin{thebibliography}{10}

\bibitem{AslamD91}
Javed~A. Aslam and Aditi Dhagat.
\newblock Searching in the presence of linearly bounded errors (extended
  abstract).
\newblock In {\em {STOC}}, pages 486--493, 1991.
\newblock \href {https://doi.org/10.1145/103418.103469}
  {\path{doi:10.1145/103418.103469}}.

\bibitem{Ben-AsherFN99}
Yosi Ben{-}Asher, Eitan Farchi, and Ilan Newman.
\newblock Optimal search in trees.
\newblock {\em {SIAM} J. Comput.}, 28(6):2090--2102, 1999.
\newblock \href {https://doi.org/10.1137/S009753979731858X}
  {\path{doi:10.1137/S009753979731858X}}.

\bibitem{Ben-OrH08}
Michael Ben{-}Or and Avinatan Hassidim.
\newblock The bayesian learner is optimal for noisy binary search (and pretty
  good for quantum as well).
\newblock In {\em FOCS}, pages 221--230, 2008.
\newblock \href {https://doi.org/10.1109/FOCS.2008.58}
  {\path{doi:10.1109/FOCS.2008.58}}.

\bibitem{Berlekamp68}
Elvyn~R. Berlekamp.
\newblock {\em Block Coding For The Binary Symmetric Channel With Noiseless,
  Delayless Feedback}, pages 61--88.
\newblock Wiley \& Sons, New York, 1968.

\bibitem{BorgstromK93}
Ryan~S. Borgstrom and S.~Rao Kosaraju.
\newblock Comparison-based search in the presence of errors.
\newblock In {\em {STOC}}, pages 130--136, 1993.
\newblock \href {https://doi.org/10.1145/167088.167129}
  {\path{doi:10.1145/167088.167129}}.

\bibitem{BravermanEGH16}
Mark Braverman, Klim Efremenko, Ran Gelles, and Bernhard Haeupler.
\newblock Constant-rate coding for multiparty interactive communication is
  impossible.
\newblock In {\em {STOC}}, pages 999--1010, 2016.
\newblock \href {https://doi.org/10.1145/2897518.2897563}
  {\path{doi:10.1145/2897518.2897563}}.

\bibitem{Burnashev1974}
Marat~Valievich Burnashev and Kamil'Shamil'evich Zigangirov.
\newblock An interval estimation problem for controlled observations.
\newblock {\em Problemy Peredachi Informatsii}, 10(3):51--61, 1974.

\bibitem{DaganFGM17}
Yuval Dagan, Yuval Filmus, Ariel Gabizon, and Shay Moran.
\newblock Twenty (simple) questions.
\newblock In {\em {STOC}}, pages 9--21, 2017.
\newblock \href {https://doi.org/10.1145/3055399.3055422}
  {\path{doi:10.1145/3055399.3055422}}.

\bibitem{DBLP:conf/innovations/DaganFKM21}
Yuval Dagan, Yuval Filmus, Daniel Kane, and Shay Moran.
\newblock The entropy of lies: Playing twenty questions with a liar.
\newblock In James~R. Lee, editor, {\em 12th Innovations in Theoretical
  Computer Science Conference, {ITCS} 2021}, volume 185 of {\em LIPIcs}, pages
  1:1--1:16.
\newblock \href {https://doi.org/10.4230/LIPICS.ITCS.2021.1}
  {\path{doi:10.4230/LIPICS.ITCS.2021.1}}.

\bibitem{Deppe2007}
Christian Deppe.
\newblock {\em Coding with Feedback and Searching with Lies}, pages 27--70.
\newblock Springer Berlin Heidelberg, Berlin, Heidelberg, 2007.
\newblock \href {https://doi.org/10.1007/978-3-540-32777-6_2}
  {\path{doi:10.1007/978-3-540-32777-6_2}}.

\bibitem{DereniowskiTUW19}
Dariusz Dereniowski, Stefan Tiegel, Przemysław Uznański, and Daniel
  Wolleb{-}Graf.
\newblock A framework for searching in graphs in the presence of errors.
\newblock In {\em {SOSA@SODA}}, pages 4:1--4:17, 2019.
\newblock \href {https://doi.org/10.4230/OASIcs.SOSA.2019.4}
  {\path{doi:10.4230/OASIcs.SOSA.2019.4}}.

\bibitem{DhagatGW92}
Aditi Dhagat, P{\'{e}}ter G{\'{a}}cs, and Peter Winkler.
\newblock On playing "twenty questions" with a liar.
\newblock In {\em SODA}, pages 16--22, 1992.

\bibitem{Emamjomeh-ZadehK17}
Ehsan Emamjomeh{-}Zadeh and David Kempe.
\newblock A general framework for robust interactive learning.
\newblock In {\em {NIPS}}, pages 7085--7094, 2017.

\bibitem{Emamjomeh-Zadeh:2015aa}
Ehsan Emamjomeh{-}Zadeh, David Kempe, and Vikrant Singhal.
\newblock Deterministic and probabilistic binary search in graphs.
\newblock In {\em {STOC}}, pages 519--532, 2016.
\newblock \href {https://doi.org/10.1145/2897518.2897656}
  {\path{doi:10.1145/2897518.2897656}}.

\bibitem{FeigeRPU94}
Uriel Feige, Prabhakar Raghavan, David Peleg, and Eli Upfal.
\newblock Computing with noisy information.
\newblock {\em {SIAM} J. Comput.}, 23(5):1001--1018, 1994.
\newblock \href {https://doi.org/10.1137/S0097539791195877}
  {\path{doi:10.1137/S0097539791195877}}.

\bibitem{GellesHKRW16}
Ran Gelles, Bernhard Haeupler, Gillat Kol, Noga Ron{-}Zewi, and Avi Wigderson.
\newblock Towards optimal deterministic coding for interactive communication.
\newblock In {\em {SODA}}, pages 1922--1936, 2016.
\newblock \href {https://doi.org/10.1137/1.9781611974331.ch135}
  {\path{doi:10.1137/1.9781611974331.ch135}}.

\bibitem{GellesMS14}
Ran Gelles, Ankur Moitra, and Amit Sahai.
\newblock Efficient coding for interactive communication.
\newblock {\em {IEEE} Trans. Information Theory}, 60(3):1899--1913, 2014.
\newblock \href {https://doi.org/10.1109/TIT.2013.2294186}
  {\path{doi:10.1109/TIT.2013.2294186}}.

\bibitem{gretta.price:2024:51stint.colloq.autom.lang.program.icalp2024}
Lucas Gretta and Eric Price.
\newblock Sharp {{Noisy Binary Search}} with {{Monotonic Probabilities}}.
\newblock In {\em {ICALP} 2024}. Schloss Dagstuhl – Leibniz-Zentrum für
  Informatik.
\newblock \href {https://doi.org/10.4230/LIPIcs.ICALP.2024.75}
  {\path{doi:10.4230/LIPIcs.ICALP.2024.75}}.

\bibitem{gu2023p5aastc}
Yuzhou Gu and Yinzhan Xu.
\newblock Optimal bounds for noisy sorting.
\newblock In {\em {STOC}}, pages 1502--1515, 2023.
\newblock \href {https://doi.org/10.1145/3564246.3585131}
  {\path{doi:10.1145/3564246.3585131}}.

\bibitem{Haeupler14}
Bernhard Haeupler.
\newblock Interactive channel capacity revisited.
\newblock In {\em {FOCS}}, pages 226--235, 2014.
\newblock \href {https://doi.org/10.1109/FOCS.2014.32}
  {\path{doi:10.1109/FOCS.2014.32}}.

\bibitem{hoeffding1963jasa}
Wassily Hoeffding.
\newblock Probability {{Inequalities}} for {{Sums}} of {{Bounded Random
  Variables}}.
\newblock {\em Journal of the American Statistical Association},
  58(301):13--30.
\newblock \href {https://doi.org/10.2307/2282952} {\path{doi:10.2307/2282952}}.

\bibitem{4502235}
Claude~Leibovici (https://math.stackexchange.com/users/82404/claude leibovici).
\newblock The taylor expansion of the binary entropy.
\newblock Mathematics Stack Exchange.
\newblock URL:https://math.stackexchange.com/q/4502235 (version: 2022-07-29).
\newblock URL: \url{https://math.stackexchange.com/q/4502235}, \href
  {http://arxiv.org/abs/https://math.stackexchange.com/q/4502235}
  {\path{arXiv:https://math.stackexchange.com/q/4502235}}.

\bibitem{KarpK07}
Richard~M. Karp and Robert Kleinberg.
\newblock Noisy binary search and its applications.
\newblock In {\em SODA}, pages 881--890, 2007.

\bibitem{LaberMP02}
Eduardo~Sany Laber, Ruy~Luiz Milidi{\'{u}}, and Artur~Alves Pessoa.
\newblock On binary searching with nonuniform costs.
\newblock {\em {SIAM} J. Comput.}, 31(4):1022--1047, 2002.
\newblock \href {https://doi.org/10.1137/S0097539700381991}
  {\path{doi:10.1137/S0097539700381991}}.

\bibitem{LamY01}
Tak~Wah Lam and Fung~Ling Yue.
\newblock Optimal edge ranking of trees in linear time.
\newblock {\em Algorithmica}, 30(1):12--33, 2001.
\newblock \href {https://doi.org/10.1007/s004530010076}
  {\path{doi:10.1007/s004530010076}}.

\bibitem{LeungNSTYY18}
Debbie Leung, Ashwin Nayak, Ala Shayeghi, Dave Touchette, Penghui Yao, and
  Nengkun Yu.
\newblock Capacity approaching coding for low noise interactive quantum
  communication.
\newblock In {\em {STOC}}, pages 339--352, 2018.
\newblock \href {https://doi.org/10.1145/3188745.3188908}
  {\path{doi:10.1145/3188745.3188908}}.

\bibitem{OnakP06}
Krzysztof Onak and Pawel Parys.
\newblock Generalization of binary search: Searching in trees and forest-like
  partial orders.
\newblock In {\em FOCS}, pages 379--388, 2006.
\newblock \href {https://doi.org/10.1109/FOCS.2006.32}
  {\path{doi:10.1109/FOCS.2006.32}}.

\bibitem{PELC200271}
Andrzej Pelc.
\newblock Searching games with errors---fifty years of coping with liars.
\newblock {\em Theoretical Computer Science}, 270(1):71 -- 109, 2002.
\newblock \href
  {https://doi.org/http://dx.doi.org/10.1016/S0304-3975(01)00303-6}
  {\path{doi:http://dx.doi.org/10.1016/S0304-3975(01)00303-6}}.

\bibitem{RivestMKWS80}
Ronald~L. Rivest, Albert~R. Meyer, Daniel~J. Kleitman, Karl Winklmann, and Joel
  Spencer.
\newblock Coping with errors in binary search procedures.
\newblock {\em J. Comput. Syst. Sci.}, 20(3):396--404, 1980.
\newblock \href {https://doi.org/10.1016/0022-0000(80)90014-8}
  {\path{doi:10.1016/0022-0000(80)90014-8}}.

\bibitem{Renyi61}
Alfréd Rényi.
\newblock On a problem of information theory.
\newblock {\em MTA Mat. Kut. Int. Kozl.}, 6B:505--516, 1961.

\bibitem{Shannon48}
Claude~Elwood Shannon.
\newblock A mathematical theory of communication.
\newblock {\em The Bell System Technical Journal}, 27:379--423, 1948.

\bibitem{Ulam76}
Stanislaw~M. Ulam.
\newblock {\em Adventures of a Mathematician}.
\newblock Scribner, New York, 1976.

\bibitem{wald1945ams}
A.~Wald.
\newblock Sequential {{Tests}} of {{Statistical Hypotheses}}.
\newblock {\em The Annals of Mathematical Statistics}, 16(2):117--186.
\newblock URL: \url{https://www.jstor.org/stable/2235829}.

\bibitem{DBLP:conf/isit/WangGW22}
Ziao Wang, Nadim Ghaddar, and Lele Wang.
\newblock Noisy sorting capacity.
\newblock In {\em {IEEE} International Symposium on Information Theory, {ISIT}
  2022}, pages 2541--2546. {IEEE}, 2022.
\newblock \href {https://doi.org/10.1109/ISIT50566.2022.9834370}
  {\path{doi:10.1109/ISIT50566.2022.9834370}}.

\end{thebibliography}

\appendix

\section{Delegated Proofs}

\begin{lemma}
\label{quadratic}
The  solution to $ax = b + \sqrt{cx}$ is of the form $x = \frac{1}{a}\left(b + \bigo\left(\frac{c}{a}+\sqrt{b}\cdot\sqrt{\frac{c}{a}}\right)\right)$.
\end{lemma}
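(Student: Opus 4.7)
The plan is to reduce the equation to a standard quadratic by a change of variables, solve it exactly, and then simplify the resulting square root to extract the claimed error term.

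First I would substitute $y = \sqrt{x}$, which transforms $ax = b + c\sqrt{x}$ into $ay^2 - cy - b = 0$. Since $y \geq 0$ and we implicitly have $a,b,c \geq 0$ (as suggested by every invocation of this lemma in the paper, where $a = I(p)$, $b = \log_2 n$, $c = \Theta(\log_2 \Gamma)\sqrt{\log \delta^{-1}}$), the relevant root is the positive one, namely
\[
y = \frac{c + \sqrt{c^2 + 4ab}}{2a}.
\]

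Squaring both sides recovers $x = y^2$ as
\[
x \;=\; \frac{c^2 + 2c\sqrt{c^2+4ab} + (c^2 + 4ab)}{4a^2} \;=\; \frac{b}{a} + \frac{c^2 + c\sqrt{c^2+4ab}}{2a^2}.
\]
Now the only remaining task is to bound the second summand by $O\!\left(\frac{c^2 + 2c\sqrt{ab}}{a^2}\right)$. For this I would invoke the elementary inequality $\sqrt{c^2 + 4ab} \leq c + 2\sqrt{ab}$, which is immediate upon squaring both sides (the right-hand-side squared equals $c^2 + 4c\sqrt{ab} + 4ab$, which is at least $c^2 + 4ab$). Multiplying by $c$ gives $c\sqrt{c^2+4ab} \leq c^2 + 2c\sqrt{ab}$, and therefore
\[
x \;\leq\; \frac{b}{a} + \frac{2c^2 + 2c\sqrt{ab}}{2a^2} \;=\; \frac{b}{a} + \frac{c^2 + c\sqrt{ab}}{a^2} \;=\; \frac{b}{a} + O\!\left(\frac{c^2 + 2c\sqrt{ab}}{a^2}\right),
\]
which is precisely the claimed asymptotic form.

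There is no real obstacle here; the argument is a one-line application of the quadratic formula followed by a crude bound on $\sqrt{c^2 + 4ab}$. The only point that deserves a moment's attention is the choice of the positive root, which is justified because each use of the lemma in the paper solves for a nonnegative quantity (a number of queries) with nonnegative coefficients.
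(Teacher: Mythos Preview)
Your proof is correct and follows essentially the same route as the paper: both reduce the equation to a quadratic (you via the substitution $y=\sqrt{x}$, the paper by squaring $ax-b=c\sqrt{x}$ directly) and both arrive at the identical closed form $x=\frac{b}{a}+\frac{c^2+c\sqrt{c^2+4ab}}{2a^2}$ before absorbing the second term into the stated $O(\cdot)$. Your version is in fact slightly tidier, since the substitution makes the choice of the positive root immediate and you spell out the bound $\sqrt{c^2+4ab}\le c+2\sqrt{ab}$ that the paper leaves implicit.
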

\begin{proof}
Solving the quadratic equation $a^2x^2 + b^2 - 2abx - cx = 0$, we get:
$$\Delta = (2ab+c)^2 - 4a^2b^2 = c(c + 4ab)$$
$$x = \frac{2ab+c + \sqrt{c^2 + 4abc}}{2a^2} = \frac{b}{a} + \bigo\left(\frac{c+\sqrt{abc}}{a^2}\right)$$
\end{proof}

\begin{lemma}\label{lem:random-walk}
Let $(X_i)_{i \in \mathbb{N_+}}$ be a sequence of i.i.d random variables with $\pr(X_i = \ell) = p$ and $\pr(X_i = r) = (1-p)$ for some $\ell, r \in \mathbb{R}$ and $0 < p < \frac{1}{2}$.
Let us fix $T > 0$ and let $Q = \inf \{m : \sum \limits_{i=1}^m X_i \geq T\}$.
If $\E[X_i] > 0$, then $\E[Q] < \infty$.
\end{lemma}

\begin{proof}
  Let $\zeta_m = \sum \limits_{i=1}^m X_i$.
  If $Q > m$, then in particular $\zeta_m \le T$, hence
  \begin{equation} \label{eq:random-walk-1}
    \pr(Q > m) \le \pr(\zeta_m \le T)
  \end{equation}
  for any $m \in \mathbb{N_+}$.

  Let $\mu = \E[X_i]$.
  Obviously, $\E[\zeta_m] = m\E[X_i] = m\mu$.
  Now, let us define $N = \lceil \frac{T}{\mu} \rceil$.
  For any $m > N$ we have
  \begin{equation} \label{eq:random-walk-2}
    \zeta_m \le T \iff \zeta_m - m\mu \le -(m\mu - T)
  \end{equation}
  and $m\mu - T > 0$.
  Using Hoeffding bound \cite{hoeffding1963jasa} we get
  \begin{equation} \label{eq:random-walk-3}
    \pr(\zeta_m - m\mu \le -(m\mu - T)) \le \exp\{\frac{-2(m\mu - T)^2}{m(\ell + r)^2}\} \le \exp\{\frac{-2m\mu^2}{(\ell + r)^2} + \frac{4\mu T}{(\ell + r)^2}\} = C \beta^{m}
  \end{equation}
  with $C = e^\frac{4\mu T}{(\ell + r)^2}$ and $\beta = e^{\frac{-2\mu^2}{(\ell + r)^2}}$.
  Observe that $0 < \beta < 1$.

  Putting together equations \eqref{eq:random-walk-1}, \eqref{eq:random-walk-2} and \eqref{eq:random-walk-3} we get
  $$ \E[Q] = \sum \limits_{m=0}^\infty \pr(Q > m) = \sum \limits_{m \le N} \pr(Q > m) + \sum \limits_{m > N} \pr(Q > m) \le \sum \limits_{m \le N} \pr(Q > m) +  C\sum \limits_{m > N} \beta^m < \infty.$$
\end{proof}

\end{document}